\DeclareMathOperator{\trace}{trace}
\newcommand{\diag}{\mathrm{diag}}
\newtheorem{Algorithm}{Algorithm}
\newcommand\numeq[1]%
\newcommand\geab[1]%
\newcommand{\ES}[1]{\textcolor{black}{{#1}}}
\DeclareMathAlphabet{\pazocal}{OMS}{zplm}{m}{n}
\let\bbordermatrix\bordermatrix
\patchcmd{\bbordermatrix}{8.75}{4.75}{}{}
\newcommand{\T}{^{\mbox{\tiny T}}}
\newcommand{\rar}{\rightarrow}
\newcommand{\tri}{\triangleq}
\newcommand{\be}{\begin{equation}}
\newcommand{\ee}{\end{equation}}
\newcommand{\bea}{\begin{eqnarray}}
\newcommand{\eea}{\end{eqnarray}}
\newcommand{\bes}{\begin{eqnarray*}}
\newcommand{\ees}{\end{eqnarray*}}
\newcommand{\bce}{\begin{center}}
\newcommand{\ece}{\end{center}}
\newcommand{\beae}{\begin{IEEEeqnarray}{rCl}}
\newcommand{\eeae}{\end{IEEEeqnarray}}
\def\VR{\kern-\arraycolsep\strut\vrule &\kern-\arraycolsep}
\def\vr{\kern-\arraycolsep & \kern-\arraycolsep}
\newcommand{\ben}{\begin{enumerate}}
\newcommand{\een}{\end{enumerate}}
\newcommand{\hso}{\hspace{.1in}}
\newtheorem{theorem}{Theorem}
\newtheorem{problem}{Problem}
\newtheorem{remark}{Remark}
\newtheorem{corollary}{Corollary}
\newtheorem{lemma}{Lemma}
\newtheorem{example}{Example}
\newtheorem{proposition}{Proposition}
\newcommand{\Cov}{\mathrm{cov}}
\begin{document}
\title{{Implicit and Explicit Formulas of the Joint RDF for a Tuple of Multivariate Gaussian Sources with Individual {Square-Error} Distortions  }} 

% %%% Single author, or several authors with same affiliation:
% \author{%
%  \IEEEauthorblockN{Andrew R.~Barron}
%  \IEEEauthorblockA{Department of Statistics and Data Science\\
%                    Yale University\\
%                    New Haven, CT, USA\\
%                    Email: andrew.barron@yale.edu}
% }

%%% Several authors with up to three affiliations:
\author{%
  \IEEEauthorblockN{Evagoras Stylianou}
  \IEEEauthorblockA{Chair of Theoretical Information Technology\\
                    Technical University of Munich,             Munich, Germany\\
                    Email: evagoras.stylianou@tum.de\vspace{-3cm}}
  \and
  \IEEEauthorblockN{Charalambos D. Charalambous and Themistoklis Charalambous}
  \IEEEauthorblockA{Department of Electrical and Computer Engineering,\\
                    University of Cyprus,
                    Nicosia, Cyprus\\
                    Email: \{chadcha, charalambous.themistoklis\hspace{-0.005cm}\}@ucy.ac.cy}\vspace{-0.9cm}
}

%%% Many authors with many affiliations:
% \author{%
%   \IEEEauthorblockN{Andrew R.~Barron\IEEEauthorrefmark{1},
%                     Claude E.~Shannon\IEEEauthorrefmark{2},
%                     David Slepian\IEEEauthorrefmark{2},
%                     and Jacob Ziv\IEEEauthorrefmark{2}\IEEEauthorrefmark{3}}
%   \IEEEauthorblockA{\IEEEauthorrefmark{1}%
%                    Department of Statistics and Data Science, Yale University, New Haven, CT, USA,
%                     andrew.barron@yale.edu}
%   \IEEEauthorblockA{\IEEEauthorrefmark{2}%
%                     Bell Telephone Laboratories, Inc.,
%                     Murray Hill, NJ, USA,
%                     \{csh,dsl,jz\}@bell-labs.com}
%   \IEEEauthorblockA{\IEEEauthorrefmark{3}%
%                     Department of Electrical Engineering, Technion---Institute of Technology, Haifa, Israel,
%                     jz@ee.technion.ac.il}
% }

\maketitle

%%%%%%
%% Abstract: 
%% If your paper is eligible for the student paper award, please add
%% the comment "THIS PAPER IS ELIGIBLE FOR THE STUDENT PAPER
%% AWARD." as a first line in the abstract. 
%% For the final version of the accepted paper, please do not forget
%% to remove this comment!
%%

\begin{abstract} This paper analyzes the joint Rate Distortion Function (RDF) of correlated multivariate Gaussian sources with individual square-error distortions. Leveraging Hotelling's canonical variable form, presented is a closed-form characterization of the joint RDF, that involves {a system of nonlinear equations. Furthermore, for the special case of symmetric distortions (i.e., equal distortions), the joint RDF is explicitly expressed in terms of} two water-filling variables. The results greatly improve our understanding and advance the development of closed-form solutions of the  joint RDF for multivariate Gaussian sources with individual square-error distortions.
\end{abstract}

\section{Introduction}

Shannon's seminal papers  \cite{shannon2001mathematical,shannon1959coding} established the fundamentals of lossy compression theory, outlining the minimum number of bits required per source symbol for compressing a source under a fidelity criterion. This fundamental concept is captured by the rate distortion function (RDF).
The RDF has been computed for various sources, including memoryless Gaussian sources \cite{berger:B1971}, multivariate Gaussian sources \cite{cover}, and sources with memory \cite{wyner1971bounds}. \ES{The extension of the RDF to two correlated sources with individual distortions was pioneered by Gray \cite{gray1973}, who coined the term \emph{joint} RDF}

% Gray \cite{gray1973} extended the RDF to two correlated sources with a compound distortion, introducing the concept of the, coined by the term, \emph{joint} RDF. 

The joint RDF with individual distortions, depicted in Fig.~\ref{def:JointRDFGauss},  has found applications in various lossy network compression problems \cite{sourcecoding, viswanatha2014lossy, xu2015lossy, feng2006rate,zamir1999multiterminal,nayak2010successive}. This includes its use in analyzing the optimal operational rate region of the Gray-Wyner network \cite{sourcecoding}, and in  defining Wyner's \emph{lossy} common information \cite{viswanatha2014lossy,xu2015lossy}. Moreover, recently it was used to characterize the semantic information of lossy compression problems~\cite{liu2021rate,liu2022indirect,stavrou2022rate,guo2022semantic} \ES{and to characterize the nonanticipative joint RDF \cite{charalambous2021joint}. Therefore, having closed-form solutions for relevant sources, such as Gaussian random variables (RVs), is desirable in the literature.}

For scalar-valued Gaussian RVs with compound square{-}error distortion, closed-form solutions were partially provided in \cite{gray1973,sourcecoding}. The complete solution, i.e., with two individual square{-}error distortion{s} was computed in \cite{xiao}, and used therein to establish a tight lower bound on the sum rate of the multi-terminal inner rate region \cite{Berger1979, Tung, Oohama2005}. This solution was further employed in \cite{lapidoth} (with an alternative derivation) to analyze the power-versus-distortion trade-off for transmitting two scalar-valued Gaussian sources over an additive white Gaussian noise channel. Additionally, in \cite{viswanatha2014lossy, xu2015lossy}, the expression was applied to compute Wyner's lossy common information. The authors \cite{charalambous2022realization} used the expression to compute rates on the Gray-Wyner rate region for Gaussian sources with square-error distortions.
In \cite{stylianou2021joint}, the authors extend the joint RDF to a tuple of multivariate Gaussian RVs with two individual square{-}error distortions, and derived closed-form expressions for a subset of the distortion region. The investigation shed light on the increased complexity inherent in the joint RDF for a tuple of sources with individual distortions compared to the  RDF for single sources and distortions.  
%memoryless .

Despite numerous applications in lossy network compression, a complete closed-form expression for a tuple of multivariate Gaussian sources with individual square-error distortions is absent. This paper bridges the gap, deriving expressions based on solutions of non-linear equations using Hotelling's canonical variable form (CVF) \cite{hotelling1936relations} to make the optimization problem tractable. While closely aligned with the water-filling principle, the changes in water level are obscured by the necessity of solving nonlinear equations. For individual symmetric distortions, we demonstrate that the solution to the joint RDF {is} explicitly expressed in terms of  {\it two water-filling variables} rather than a single water-filling variable of the conventional RDF \cite{cover}. \vspace{-0.2cm}

\begin{figure}[t]
  \centering
\includegraphics[height=2.5cm,width=\columnwidth]{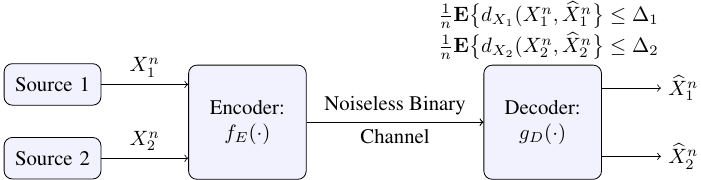}
\caption{ Lossy Compression of correlated sources with individual distortion criteria.}
 \vspace{-0.6cm}
  \label{fig:jointRDF}
\end{figure}

% In \cite{stylianou2021joint}, the authors characterize the joint RDF for a tuple of multivariate Gaussian RVs with two square error distortions. They formulate this problem as a semidefinite program and derive closed-form expressions for a subset of the distortion region. The study highlights the increased complexity of the joint RDF compared to the single source memoryless RDF for multivariate sources.

% Despite extensive applications in lossy network compression problems, a complete closed-form expression for a tuple of multivariate Gaussian sources with square-error distortions remains elusive. 

% This paper addresses this gap by deriving closed-form expressions based on solutions of non-linear equations, employing Hotelling's canonical variable form (CVF) \cite{hotelling1936relations} in order to make the optimization problem tractable. While the solution closely aligns with the water-filling principle, the changes in water level are not immediately apparent due to the necessity of solving the associated nonlinear equations.  An important special case arises for symmetric distortions (equal distortions). In this context, we demonstrate that the solution to the joint RDF involves a water-filling approach with two water-filling variables instead of the conventional single-variable approach.

\section{problem statement and preliminaries}
 \subsection{Notation}
Let  ${\mathbb Z}$,  ${\mathbb Z}_+$, $\mathbb{R}$,  be the set of integers, positive integers, {and} real numbers,  respectively,  and ${\mathbb Z}_+^n\tri \{1,2, \ldots,n \}, n \in {\mathbb Z}_+$.
%Let $\mathbb{R}$ be the set of real numbers.
 The expression $\mathbb{R}^{n \times m}$ {for $n,m \in {\mathbb Z}_+$,  denotes the set of $n$ by $m$ matrices with entries 
 %elements belonging in the set of 
 the  real numbers.
  For a symmetric matrix $Q \in {\mathbb R}^{n \times n}$, the notation  $Q \succeq 0$ (resp. $Q \succ 0$) means  the matrix is positive semi-definite (resp. positive definite). }
% For two  symmetric matrices $Q_i \in {\mathbb R}^{n \times n},\; i=1,2$,
%the notation $Q_2 \succeq Q_1$ means that
%$Q_2 - Q_1 \succeq  0$. 
For any  matrix $A\in \mathbb{R}^{p\times m}$, we denote its transpose by $A\T$, its pseudoinverse by $A^\dagger$,  and for $m=p$,  we denote its trace and its determinant  by  $\trace(A)$ and $\det\big(A\big)$, respectively. By $\diag(A)$ we denote the diagonal matrix with entries those of A and zero elsewhere. By  $I_p$ and $0_{t\times s}$ we denote the ${p\times p}$ identity matrix and the ${t\times s}$ matrix with all zero entries respectively. By $X \in G(0, Q_{X})$ we denote a finite-dimensional Gaussian RV with zero mean and covariance matrix $Q_{X}\succeq 0$.
The mutual information between two  RVs $X$ and $Y$ is denoted by $I(X;Y)$ defined as in~\cite{gallager1968information}.

\vspace*{-0.3cm}
\subsection{Problem statement}
Consider a tuple of RVs, $X_i: \Omega \rightarrow \mathbb{X}_i,\:i=1,2$
where $\mathbb{X}_i$ represents arbitrary measurable spaces equipped with corresponding metrics 
and a tuple of  distortion functions,  $d_{X_i} :{\mathbb X}_i \times \widehat{{\mathbb X}}_i \rar [0,\infty),i=1,2$, where  $\widehat{\mathbb X}_i \subseteq {\mathbb X}_i,~i=1,2$. Then, the joint RDF is defined as follows:\vspace{-0.1cm}
\begin{align} 
&{R}_{X_1,X_2}(\Delta_1,\Delta_2)  \tri  \inf_{\pazocal{M}(\Delta_{1},\Delta_{2})} I(X_1,X_2;\widehat{X}_1,\widehat{X}_2) \in [0,\infty], \label{def:JointRDFGauss}\\
%\end{align}
%where, 
%\begin{align*}
&\pazocal{M}(\Delta_{1},\Delta_{2}) \tri\big \{\widehat{X}_1: \Omega \rightarrow  \widehat{{\mathbb X}}_1,\;\;\widehat{X}_2: \Omega \rightarrow  \widehat{{\mathbb X}}_2 \big |\; \mathbf{P}_{X_1,X_2,\widehat{X}_1,\widehat{X}_2} \text{has}\nonumber \\ & \text{$(X_1,X_2)$-marginal ${\bf P}_{X_1,X_2}$},  \; \mathbf{E} \big\{ d_{X_i}(X_i,\widehat{X}_i) \big \} \le \Delta_{i}, \;i=1,2    \big \}. \nonumber 
\end{align}
The joint RDF characterizes the infimum of all achievable  rates of a sequence of  rate distortion codes, $(f_E, g_D)$, as depicted in Fig.~\ref{fig:jointRDF},  of  reconstructing $(X_1^n, X_2^n) \tri  \{(X_{1,t}, X_{2,t})\}_{t=1}^n$ with ${\bf P}_{X_{1,t}, X_{2,t}}={\bf P}_{X_1,X_2}, \forall t$, by  $(\widehat{X}_1^n, \widehat{X}_2^n)$  with  average distortions $\frac{1}{n}{\bf E}\{d_{X_i} (X_i^n, \widehat{X}_i^n)\}\leq \Delta_i, i=1,2$, for sufficiently large~$n$.

We focus on a tuple of jointly independent and identically distributed multivariate Gaussian RVs, $(X_1^n, X_2^n)$, $X_{i,t}:\Omega \rar \mathbb{R}^{p_i},\; X_{i,t} \in G(0, Q_{X_i}),\forall t,i=1,2$,  
% i.e., with zero mean and positive semi-definite covariance matrix $Q_{X_i}\succeq 0$, and 
 $(X_{1,t}, X_{2,t}) \in G(0,Q_{(X_{1}, X_{2})})$, %$Q_{(X_{1}, X_{2})}\succeq 0$, 
 and individual   square-error distortion  functions, {given by} \vspace{-0.1cm}
 %  all defined by 
\begin{align}
&Q_{(X_{1}, X_{2})} =  {\mathbf E} \bigg\{ \begin{pmatrix} X_{1} \\ X_{2}  \end{pmatrix}  \begin{pmatrix}X_{1} \\ X_{2}  \end{pmatrix}\T \bigg\}=   \begin{pmatrix} Q_{X_1} & Q_{X_1,X_2} \\ Q_{X_1,X_2}\T & Q_{X_2} \end{pmatrix} \label{prob_1}\\
%\hst  {\mathbf E} \Big\{ \left( \begin{array}{c} X_{1,t} \\ X_{2,t}  \end{array} \right)\Big\} =0,   \\
& \widehat{X}_{1,t}: \Omega  \rar  {\mathbb R}^{p_1}, \hso \widehat{X}_{2,t}: \Omega  \rar    {\mathbb R}^{p_2} \hso
\forall t,\label{prob_8} \\
& d_{X_i} (x_i^n, \widehat{x}_i^n)= \frac{1}{n} \sum_{t=1}^n ||x_{i,t}-\widehat{x}_{i,t}||^{{2}},\;\;\;i=1,2.  \label{prob_9}
\end{align}
Then, the joint RDF for (\ref{prob_1})-(\ref{prob_9}) is given by \cite{stylianou2021joint} \vspace{-0.1cm}
\begin{align}
&R_{X_1,X_2}(\Delta_{1},\Delta_{2})
= \inf_{{{\pazocal{Q}}}(\Delta_{1},\Delta_{2})}    \frac{1}{2}\log\bigg(\frac{\det\big( Q_{(X_1,X_2)}\big)}{\det\big(\Sigma_{(E_1,E_2)}\big) } \bigg) ,  \label{eq:JointRDFOpti_new}
\end{align}
where\vspace{-0.1cm}
\begin{align}
& {{\pazocal{Q}}}(\Delta_{1},\Delta_{2}) \tri  \big \{\Sigma_{(E_1,E_2)}\succeq 0   \big |  \: \: Q_{(X_1,X_2)} \succeq \Sigma_{(E_1,E_2)} \succeq 0, \nonumber \\ &\hspace{2.1cm}\trace \big ( \Sigma_{E_1}\big) \le \Delta_{1}, \; \trace \big ( \Sigma_{E_2}\big) \le \Delta_{2} \big \},\nonumber \\
%\end{align}
%and,
%\begin{align*}
& E \tri \begin{pmatrix}
E_1 \\ E_2
\end{pmatrix}, \quad E_i \tri  X_i - \widehat{X}_i,\;  i=1,2,\nonumber \\&   \Sigma_{(E_1,E_2)} = \Cov (E,E) = \begin{pmatrix}
\Sigma_{E_1} & \Sigma_{E_1,E_2}  \\
\Sigma_{E_1,E_2} ^T & \Sigma_{E_2} \\
\end{pmatrix}.\nonumber 
\end{align}

\subsection{Reformulation of the joint RDF for RVs in the CVF }
In this section we reformulate the joint RDF {given by 
 (\ref{eq:JointRDFOpti_new}), for $(X_1, X_2)$ in CVF}  \cite{hotelling1936relations, van2021control}.  We focus on the algorithm for transforming the tuple of Gaussian RVs $(X_1, X_2)$ into the CVF.

\begin{Algorithm}{ \cite{van2021control}} \label{alg:cvf}Transformation of a covariance matrix $Q_{(X_1,X_2)} \in \mathbb{R}^{(p_1+p_2)\times(p_1+p_2)}$, satisfying $Q_{(X_1,X_2)} \succeq 0$, {$Q_{X_i} \succ 0, i=1,2$} to its CVF. \\
(1) Perform the singular value decompositions (SVDs), \vspace{-0.1cm}
% $Q_{X_i} = U_iD_iU_i^T,\;i=1,2$ with $U_i\in \mathbb{R}^{p_i\times p_1},\;$ orthogonal and $D_i = \diag(d_{1,1},\dots,d_{i,p_{i}})\in \mathbb{R}^{p_1\times p_1},\;\;d_{i,1} \ge d_{i,2} \ge \dots \ge d_{i,p_i}>0$. 
\begin{align*}
&Q_{X_i} = U_iD_iU_i^T,\quad U_iU_i\T = I_{p_i},\quad i=1,2, \\
&D_i = \diag(d_{1,1},\dots,d_{i,p_{i}})\in \mathbb{R}^{p_1\times p_1},\;\;d_{i,1} \ge \dots \ge d_{i,p_i}>0
\end{align*}
(2) Perform the SVD, \vspace{-0.1cm}
\begin{align*}
    D_1^{-\frac{1}{2}}U_1^T Q_{X_1,X_2}U_2D_2^{-\frac{1}{2}} = U_3D_3U_4^T,
\end{align*}
where $U_3$ and $U_4$ are orthogonal and\vspace{-0.1cm}
\renewcommand\arraystretch{0.92}
\begin{align}
&D_3 = \mathrm{Block}\text{-}\diag\big(I_{p_{11}},D_4,0_{p_{13} \times  p_{23}} \big)\in \mathbb{R}^{p_1\times p_2},\nonumber \\&D_4 = \diag(d_{4,1},\dots,d_{4,p_{12}})\in \mathbb{R}^{p_{12}\times p_{22}},\nonumber \\
& p_i = p_{i1} + p_{i2} + p_{i3},\;\;i=1,2,\;\; p_{11} = p_{21},\;\; p_{12} = p_{22},\nonumber
\end{align}
where $1 > d_{4,1} \ge \dots \ge d_{4,p_{12}}>0$ are refer to as canonical correlation coefficients.\\
(3) Compute the new covariance matrix and the linear transformation $(S_1,S_2)$ of $(X_1,X_2)$ as follows:  \renewcommand\arraystretch{1.3}
\begin{align}
Q_{\mathrm{\mathrm{cvf}}} = \begin{pmatrix}
I_{p_1} & D_3 \\
D_3^T & I_{p_2}
\end{pmatrix},\;\; S_1 =U_3^TD_1^{-\frac{1}{2}}U_1^T,\;\;S_2 = U_4^TD_2^{-\frac{1}{2}}U_2^T.\nonumber
\end{align}
Then, $X_i^\mathrm{c}=S_i X_i,\;i=1,2$ are such that $(X_1^\mathrm{c}, X_2^\mathrm{c})\in  G(0, Q_{\mathrm{\mathrm{cvf}}})$.
\end{Algorithm} 
\begin{remark} \label{rem:cvf}
We focus on RVs in  CVF $(X_1,X_2)=(X_1^\mathrm{c}, X_2^\mathrm{c}) \in G(0,Q_{\mathrm{cvf}})$, and, without loss of generality, we consider only correlated components, i.e., with indices $p_{11} = p_{21} = p_{13} = p_{23} = 0$, $p_{12} = p_{22} = n$, specified~by
\begin{align*}
%    & (X_1,X_2) \in G(0,Q_{\mathrm{cvf}}), \ \ Q_{(X_1,X_2)}=Q_{\mathrm{cvf}},  \hso  
%         X_i: \Omega \rightarrow {\mathbb R}^n, \ \ i=1,2, \ \ n \in {\mathbb Z}_+,   \label{ci_par_1}   \\
     & Q_{(X_1,X_2)}=Q_{\mathrm{cvf}}
   = \begin{pmatrix}          I_n & D \\
          D & I_n
        \end{pmatrix}
        ,  \;\; D
 = \diag (d_1, d_2, \ldots, d_n) \in  {\mathbb R}^{n \times n}.
\end{align*}
\end{remark}
First, we derive a lower bound on the mutual information.

\begin{corollary} \label{cor_CVFjoint} Consider $(X_1,X_2)\in G(0,Q_{(X_1, X_2)})$ in CVF of Remark~\ref{rem:cvf}. The following statements hold.\\
(a) The mutual information satisfies the lower bound,
\begin{align}
I(X_1, X_2; \widehat{X}_1, \widehat{X}_2) \geq& \sum_{i=1}^n I(X_{1,i}, X_{2,i} ; \widehat{X}_{1,i} \widehat{X}_{2,i} ). \nonumber
\end{align}
(b) The lower bound is achieved if the RVs $(X_{1,i}, X_{2,i}, \widehat{X}_{1,i}, \widehat{X}_{2,i})$ are independent of $(X_{1,j}, X_{2,j}, \widehat{X}_{1,j}, \widehat{X}_{2,j}),\;\forall i \neq j$. This implies that the error RVs $(E_{1,i},E_{2,j})$ are mutually independent for all $i\neq j$ and the optimal test channel of the joint RDF is induced by the diagonal matrices,
\begin{align}
 \Sigma_{E_j}=\diag(\Delta_{j,1},\dots,\Delta_{j,n}), \;j=1,2, \;\; \Sigma_{E_1,E_2} = \diag(\widehat{d}_1,\dots,\widehat{d}_n).\nonumber
 \end{align}
\end{corollary}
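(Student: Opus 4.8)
The plan is to exploit the block structure of $Q_{\mathrm{cvf}}$. Pairing the components as $V_i \tri (X_{1,i},X_{2,i})$ and $\widehat V_i \tri (\widehat X_{1,i},\widehat X_{2,i})$, the covariance $Q_{(X_1,X_2)}=\begin{pmatrix} I_n & D\\ D & I_n\end{pmatrix}$ with $D=\diag(d_1,\dots,d_n)$ shows that each $V_i$ has covariance $\begin{pmatrix} 1 & d_i\\ d_i & 1\end{pmatrix}$ and that, across distinct indices, the blocks are uncorrelated. Since $(X_1,X_2)$ is jointly Gaussian, being uncorrelated across blocks is the same as being independent, so $\{V_i\}_{i=1}^n$ are mutually independent. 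Part (a) then becomes the standard fact that the mutual information between a family of independent sources and an arbitrary joint reconstruction is lower bounded by the sum of the per-block mutual informations, and part (b) is the corresponding equality analysis.

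For (a), writing $V=(V_1,\dots,V_n)$ and $\widehat V=(\widehat V_1,\dots,\widehat V_n)$, I would start from $I(X_1,X_2;\widehat X_1,\widehat X_2)=I(V;\widehat V)=h(V)-h(V\mid \widehat V)$. Mutual independence of the blocks gives $h(V)=\sum_{i=1}^n h(V_i)$. Applying the chain rule and then ``conditioning reduces entropy'' with $V^{i-1}\tri (V_1,\dots,V_{i-1})$ yields $h(V\mid \widehat V)=\sum_{i=1}^n h(V_i\mid V^{i-1},\widehat V)\le \sum_{i=1}^n h(V_i\mid \widehat V)\le \sum_{i=1}^n h(V_i\mid \widehat V_i)$. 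Subtracting, $I(V;\widehat V)\ge \sum_{i=1}^n\big[h(V_i)-h(V_i\mid \widehat V_i)\big]=\sum_{i=1}^n I(V_i;\widehat V_i)$, which is the asserted bound. If the left-hand side is infinite the inequality is trivial; otherwise the Gaussian, nondegenerate marginals together with the finite-distortion reconstructions keep all differential entropies finite, so the manipulations are valid.

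For (b), I would track exactly where the two inequalities in (a) are slackened. Equality in $h(V_i\mid V^{i-1},\widehat V)=h(V_i\mid \widehat V)$ requires $V_i$ conditionally independent of $V^{i-1}$ given $\widehat V$, and equality in $h(V_i\mid \widehat V)=h(V_i\mid \widehat V_i)$ requires $V_i$ conditionally independent of $\{\widehat V_j\}_{j\ne i}$ given $\widehat V_i$. Under the hypothesis of (b), namely that the tuples $(V_i,\widehat V_i)$ are mutually independent across $i$ (equivalently, the joint law factorizes as $\prod_{i=1}^n \mathbf{P}_{V_i,\widehat V_i}$), each $V_i$ is independent of everything carried by the other blocks, so both conditions hold simultaneously and the bound is met with equality. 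Finally, under this factorization each error component $E_{j,i}=X_{j,i}-\widehat X_{j,i}$ is a function of block $i$ alone, whence $\mathbf{E}\{E_{j,i}E_{k,l}\}=0$ for $i\ne l$; this forces $\Sigma_{E_1}$, $\Sigma_{E_2}$ and $\Sigma_{E_1,E_2}$ to be diagonal, giving the claimed test-channel structure $\Sigma_{E_j}=\diag(\Delta_{j,1},\dots,\Delta_{j,n})$ and $\Sigma_{E_1,E_2}=\diag(\widehat d_1,\dots,\widehat d_n)$.

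The inequality in (a) is routine; the main obstacle is the equality analysis in (b), where one must verify that the two independently slackened inequalities become tight \emph{simultaneously} precisely under block independence, and then translate ``independent blocks'' into ``diagonal error covariances'' without gaps. A secondary point I would record is feasibility: the block-independent (hence diagonal) choice of $\Sigma_{(E_1,E_2)}$ still satisfies $Q_{(X_1,X_2)}\succeq \Sigma_{(E_1,E_2)}\succeq 0$ and the trace constraints, so restricting the optimization in (\ref{eq:JointRDFOpti_new}) to diagonal $\Sigma_{(E_1,E_2)}$ loses no optimality, which is what makes the single-letter reduction usable in the sequel.
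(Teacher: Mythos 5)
Your proposal is correct and follows essentially the same route as the paper, whose proof is simply the one-line appeal to ``well-known properties of mutual information'': you spell out exactly those properties (independence of the CVF blocks $(X_{1,i},X_{2,i})$ across $i$, the chain rule, and conditioning reduces entropy) to get part (a), and the standard equality/factorization analysis plus zero covariance of independent blocks to get part (b). The added remark on feasibility of the diagonal $\Sigma_{(E_1,E_2)}$ is a useful (and correct) elaboration of why the reduction is usable in Problem~\ref{prob-main}, not a deviation from the paper's argument.
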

\begin{proof} It follows by well-known properties of mutual information for general RVs. 
\end{proof}
% To reformulate the joint RDF we use of the following.
\begin{lemma}(Extended Schur's Lemma \cite{albert1969conditions,rami2000linear})
\label{lem_schur}\\
 Let $M =M\T$, $R =R\T$ and $N$ be given matrices of appropriate dimensions. {The{n,} statements (a) and (b)} below are equivalent.\renewcommand\arraystretch{0.7} 
 \begin{align}
 &(a)\; \begin{pmatrix}
 M & N \\ N\T & R
 \end{pmatrix}\succeq 0, \nonumber \\ &(b) \;M-N R^\dagger N\T \succeq 0,\;\;\;\; R \succeq 0,\;\;\;\; { \big(I-R R^\dagger\big)N\T =0.} \nonumber
 \end{align}
\end{lemma}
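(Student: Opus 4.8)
The plan is to reduce the singular block matrix in (a) to block-diagonal form by a congruence transformation built from the pseudoinverse, mirroring the classical Schur complement argument while tracking the extra terms that appear when $R$ is not invertible. Throughout I would use the defining Moore--Penrose identities $R R^\dagger R = R$ and $R^\dagger R R^\dagger = R^\dagger$, together with the fact that, for symmetric $R$, the pseudoinverse $R^\dagger$ is symmetric and $R R^\dagger = R^\dagger R$ is the orthogonal projection onto $\mathrm{range}(R)$. Concretely, introducing the invertible triangular matrix $T = \begin{pmatrix} I & 0 \\ -R^\dagger N\T & I\end{pmatrix}$, I would compute the congruence
\[
T\T \begin{pmatrix} M & N \\ N\T & R \end{pmatrix} T = \begin{pmatrix} M - N R^\dagger N\T & N(I - R^\dagger R) \\ (I - R R^\dagger) N\T & R \end{pmatrix},
\]
where the top-left entry simplifies using $R^\dagger R R^\dagger = R^\dagger$ and the symmetry of $R^\dagger$. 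Since $T$ is invertible, this congruence preserves positive semidefiniteness, so (a) holds if and only if the right-hand side is positive semidefinite.

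For the direction (b) $\Rightarrow$ (a), I would note that the condition $(I - R R^\dagger) N\T = 0$, together with $R R^\dagger = R^\dagger R$, forces both off-diagonal blocks $N(I - R^\dagger R)$ and $(I - R R^\dagger) N\T$ to vanish. The congruence then collapses to $\diag\big(M - N R^\dagger N\T,\, R\big)$, which is positive semidefinite precisely because both diagonal blocks are, as assumed in (b); hence the matrix in (a) is positive semidefinite.

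For the converse (a) $\Rightarrow$ (b), the condition $R \succeq 0$ is immediate, as $R$ is a principal submatrix of a positive semidefinite matrix. The crucial step is the range condition: I would evaluate the quadratic form on vectors $\big(x\T, y\T\big)\T$ with $y \in \ker(R)$, obtaining $x\T M x + 2 x\T N y \ge 0$ for all $x$; linearity in $x$ forces $N y = 0$, so $\ker(R) \subseteq \ker(N)$, which is equivalent to $\mathrm{range}(N\T) \subseteq \mathrm{range}(R)$, i.e. $(I - R R^\dagger) N\T = 0$. With this range condition established, the congruence again yields block-diagonal form, and positive semidefiniteness of its top-left block delivers $M - N R^\dagger N\T \succeq 0$.

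I expect the main obstacle to be the careful handling of the singular case rather than any deep idea: one must verify that the pseudoinverse identities make the off-diagonal terms cancel exactly when the third condition in (b) holds, and must derive the range inclusion $\mathrm{range}(N\T) \subseteq \mathrm{range}(R)$ from positive semidefiniteness. This range condition is precisely the feature distinguishing the extended lemma from the classical invertible-$R$ Schur complement, and it is what legitimizes replacing $R^{-1}$ by $R^\dagger$ in the Schur complement $M - N R^\dagger N\T$.
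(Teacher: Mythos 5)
Your proof is correct: the congruence with $T=\begin{pmatrix} I & 0\\ -R^\dagger N\T & I\end{pmatrix}$, the cancellation of the off-diagonal blocks via $(I-RR^\dagger)N\T=0$, and the kernel/scaling argument giving $\ker(R)\subseteq\ker(N)$ (hence the range condition) in the converse direction are all sound. The paper itself offers no proof of this lemma---it is quoted directly from \cite{albert1969conditions,rami2000linear}---and your argument is essentially the standard one in those references, so there is nothing to reconcile.
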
 
We can replace $Q_{(X_1,X_2)}-\Sigma_{(E_1,E_2)}\succeq 0$ by the expression in Lemma~\ref{lem_schur} item (b) and use $\det(\Sigma_{(E_1,E_2})=\det(\Sigma_{E_1})\det(\Sigma_{E_2}-\Sigma_{E_1,E_2}\Sigma_{E_1}\Sigma_{E_1,E_2}\T)$. This allows the following reformulation,\vspace{-0.1cm}
% This enables the following reformulation.
\begin{problem} 
\label{prob-main}
The joint RDF given in  \eqref{eq:JointRDFOpti_new} for  a tuple of RVs {$(X_1,X_2)\in G(0,Q_{(X_1,X_2)})$ in} CVF specified in Remark~\ref{rem:cvf}, is re-formulated as follows: \vspace{-0.1cm}
 \begin{align*}
 &R_{X_1,X_2}(\Delta_1,\Delta_2) = \nonumber\\
 &\inf_{\Delta_{1,i},\Delta_{2,i}\ge 0,\; \widehat{d}_i,\in [0,1), \:\forall i \in {\mathbb Z}_+^n} \frac{1}{2} \sum_{i=1}^n \log \Bigg ( \frac{(1 -d_i^2)}{(\Delta_{1,i}\Delta_{2,i}-\widehat{d}_i^2)}  \Bigg ), \\
&\mathrm{s.t.}\;\;\big(1 - \Delta_{1,i}\big)-\big(d_i -\widehat{d}_i \big )^2\big(1 - \Delta_{2,i}\big)^{\dagger} \ge 0,\;\forall i \in \mathbb{Z}_+^n,\\
&\hspace{0.6cm} \big(d_i -  \widehat{d}_i \big )\big ( 1 - \big(1 - \Delta_{2,i}\big) \big(1 - \Delta_{2,i}\big)^{\dagger} \big)=0, \;\forall i \in \mathbb{Z}_+^n,\\
&\hspace{0.6cm}(1 - \Delta_{2,i})\ge 0,\;\;\forall i\in\mathbb{Z}_+^n,\;\; \sum_{i=1}^n\Delta_{1,i} \le \Delta_{1},\;\;  \sum_{i=1}^n\Delta_{2,i} \le \Delta_{2}.  
\end{align*}
\end{problem}

The transformation of the Gaussian RVs $(X_1, X_2)$ into their CVF renders the optimization problem for the joint RDF more tractable. This is evident because the transformation allows us to optimize with respect to the diagonal elements of the blocks $\Sigma_{E_1}$, $\Sigma_{E_2}$, and $\Sigma_{E_1,E_2}$.\vspace{-0.2cm}
% {(contrary to $Q_{(X_1, X_2)}$ not in CVF, which requires optimization over the block   matrix~$\Sigma_{(E_1,E_2)}$)}.

\section{{implicit and explicit expressions of the joint rdf}}\vspace{-0.2cm}
Here, we address Problem~\ref{prob-main} for {arbitrary $(\Delta_1, \Delta_2)$  and for the special case of symmetric distortion, i.e., $\Delta_1=\Delta_2$. 
% For $n=1$,  we show that our expressions reproduce the known joint RDF  in \cite{xiao} .}
\subsection{Implicit expressions for general distortions}
Consider Problem~\ref{prob-main} for any distortion pair $(\Delta_1,\Delta_2)$.\vspace{-0.1cm}
\begin{theorem}  \label{thm:CovJoint}
Consider Problem~\ref{prob-main} and the sets of parameters $\{\Delta_{1,i},\Delta_{2,i}\}_{i=1}^n$, $\{{d}_i\}_{i=1}^n$ and  $\{\widehat{d}_i\}_{i=1}^n$. Define the following distortion regions for 
$i,j \in \{0,1\},\;i\neq j$,\vspace{-0.1cm}
\begin{align*}
{\mathcal{M}}_{i} = \Big \{  (\Delta_{1},\Delta_{2}) \in \mathbb{R}_+^2\Big | \;\; \Delta_{i} > n - \sum_{i=1}^n d_i^2 \Big (1-\min\Big \{1, \frac{\Delta_j}{n},\Big \}\Big) \Big \}.
\end{align*}
Let  $\kappa$ be  the number of non-zero elements of the set $\{\widehat{d}_i\}_{i=1}^n$ and  $\ell$ be the number of quadruples  $(\Delta_{1,i}, \Delta_{2,i}, d_i, \widehat{d}_i)$  such that with $\Delta_{1,i}=\Delta_{2,i} = 1$ and $\widehat{d}_i = d_i$. Assume $(\kappa, \ell)$  is  known.\\ Then, the correlation coefficients $\{\widehat{d}_i\}_{i=1}^n$ are given by\vspace{-0.1cm}
\begin{align}
\widehat{d}_i = \begin{cases} 
d_{i} - \sqrt{(1-\Delta_{1,i})(1-\Delta_{2,i})}, 
& \forall i\in \mathbb{Z}_+^\kappa, \\
0, & \forall i \in {\mathbb Z}_+^n \setminus {\mathbb Z}_+^\kappa
\end{cases} \label{coref},
\end{align}
and $\{\Delta_{1,i},\Delta_{2,i}\}_{i=1}^n$ are obtained as follows.\\
Case (a): If $(\Delta_1,\Delta_2) \notin \mathcal{M}_k $ for $k=1,2$ and $\kappa = 0$, then, \vspace{-0.1cm}
\begin{align}
    \Delta_{1,i} = \frac{\Delta_1}{n}, \quad \Delta_{2,i} = \frac{\Delta_2}{n},\quad  \forall i \in \mathbb{Z}_+^n.
\end{align}
Case (b): If $(\Delta_1,\Delta_2) \notin \mathcal{M}_k$ for $k=1,2$ and $\kappa< n$, then $\{\Delta_{1,i},\Delta_{2,i}\}_{i=1}^n$, are determined for all $ i \in \mathbb{Z}_+^\kappa$ by \vspace{-0.1cm}
\begin{align}
& \frac{1- d_{i}\sqrt{\frac{1-\Delta_{s,i}}{1-\Delta_{t,i}}}}{\Delta_{s,i}\Delta_{t,i} -\big ( d_{i} - \sqrt{(1-\Delta_{s,i})(1-\Delta_{t,i})}\big )^2} + \frac{n-\kappa}{\Delta_{t} - \sum_{i=1}^\kappa\Delta_{t,i}} = 0, \label{eqsys11}
% & \frac{1- d_{i}\sqrt{\frac{1-\Delta_{1,i}}{1-\Delta_{2,i}}}}{\Delta_{1,i}\Delta_{2,i} -\big ( d_{i} - \sqrt{(1-\Delta_{1,i})(1-\Delta_{2,i})}\big )^2} + \frac{n-\kappa}{\Delta_{2} - \sum_{i=1}^\kappa\Delta_{2,i}} = 0, \quad \forall i \in \mathbb{Z}_+^\kappa, \label{eqsys21}\\
\end{align}
where $s,t \in \{1,2\},\;s\neq t$ and for all  $i \in \mathbb{Z}_+^n\setminus \mathbb{Z}_+^\kappa$\vspace{-0.1cm}
\begin{align}
&\Delta_{1,i} - \frac{\Delta_{1} - \sum_{i=1}^\kappa\Delta_{1,i}}{n-\kappa} = 0,\;\;\;\;\Delta_{2,i} - \frac{\Delta_{2} - \sum_{i=1}^\kappa\Delta_{2,i}}{n-\kappa} = 0.  \label{eqsys41}
\end{align}
Case (c):  If $(\Delta_1,\Delta_2) \notin \mathcal{M}_k$ for $k=1,2$, $\kappa=n$ and $\ell \ge 0$, then $\{\Delta_{1,i},\Delta_{2,i}\}_{i=1}^{n-\ell}$, are determined for all $ i \in \mathbb{Z}_+^{n-\ell}$ by
 \begin{align}
 & -\frac{1- d_{i}\sqrt{\frac{1-\Delta_{s,i}}{1-\Delta_{t,i}}}}{2(\Delta_{s,i}\Delta_{t,i} -\big ( d_{i} - \sqrt{(1-\Delta_{s,i})(1-\Delta_{t,i})}\big )^2)} + \lambda_t = 0,\label{eq:nonsysteml1}\\ & \sum_{i =1}^{n-\ell} \Delta_{1,i} = \Delta_1-\ell,\quad \sum_{i =1}^{n-\ell} \Delta_{2,i} = \Delta_2-\ell,\label{eq:nonsysteml2}
 % & -\frac{1- d_{i}\sqrt{\frac{1-\Delta_{1,i}}{1-\Delta_{2,i}}}}{2(\Delta_{1,i}\Delta_{2,i} -\big ( d_{i} - \sqrt{(1-\Delta_{1,i})(1-\Delta_{2,i})}\big )^2)} + \lambda_2 = 0,\;\;\forall i\in \mathbb{Z}_+^{n-\ell},\quad\sum_{i\in {\mathbb Z}_+^{n-\ell}} \Delta_{2,i} = \Delta_2-\ell, \label{eq:nonsysteml2}
%& \Delta_{j,n-\ell},\dots,\Delta_{j,n} = 1, \ \ j=1,2.
\end{align}
where $s,t \in \{0,1\},\;s\neq t$ and $\{\Delta_{1,i} = \Delta_{2,i} = 1\}_{i=n-\ell}^n$.\\
Case (d):  If $(\Delta_1,\Delta_2) \in \mathcal{M}_1$, then  $\kappa=n$ and $\{\Delta_{1,i},\Delta_{2,i}\}_{i=1}^n$, are determine, for all $i\in \mathbb{Z}_+^{n}$ by
\begin{align}
& \hspace{-0.15cm}\Delta_{2,i} =\min\Big \{1,\frac{\Delta_2}{n}\Big\}, \;\;  \Delta_{1,i} = 1-d_i^2(1-\Delta_{2,i}),\;\; \widehat{d}_i = d_i\Delta_{2,i}.\nonumber
\end{align}
Case (e): If $(\Delta_1,\Delta_2) \in \mathcal{M}_2$, then  $\kappa=n$ and $\{\Delta_{1,i},\Delta_{2,i}\}_{i=1}^n$, are determine, for all $i\in \mathbb{Z}_+^{n}$ by 
\begin{align}
& \hspace{-0.15cm}\Delta_{1,i} = \min\Big \{1,\frac{\Delta_1}{n}\Big \}, \;\; \Delta_{2,i} = 1-d_i^2(1-\Delta_{1,i}),\;\; \widehat{d}_i = d_i\Delta_{1,i}. \nonumber
\end{align}
\end{theorem}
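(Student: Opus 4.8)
The plan is to solve Problem~\ref{prob-main} by the Karush--Kuhn--Tucker (KKT) conditions. I would first record that the underlying problem \eqref{eq:JointRDFOpti_new} is convex: $-\tfrac12\log\det\Sigma_{(E_1,E_2)}$ is convex on the positive-definite cone and the feasible set $\pazocal{Q}(\Delta_1,\Delta_2)$ is a spectrahedron, so a feasible point is optimal \emph{iff} it satisfies KKT. By Corollary~\ref{cor_CVFjoint} the optimal covariances are diagonal in the CVF coordinates, which reduces the program to the scalar variables $\{\Delta_{1,i},\Delta_{2,i},\widehat d_i\}$ of Problem~\ref{prob-main}; it then suffices to build, in each distortion regime, a KKT point.

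Second, I would eliminate $\widehat d_i$ pointwise. For fixed $(\Delta_{1,i},\Delta_{2,i})$ the objective decreases as $\widehat d_i^2$ decreases, so one pushes $\widehat d_i$ toward $0$ as far as feasibility allows. Reading the first two constraints of Problem~\ref{prob-main} through Lemma~\ref{lem_schur} shows that the admissible $\widehat d_i$ form the interval with endpoints $d_i\pm\sqrt{(1-\Delta_{1,i})(1-\Delta_{2,i})}$, with the pseudoinverse condition forcing $\widehat d_i=d_i$ when $\Delta_{2,i}=1$. Hence the minimizing choice is $\widehat d_i=0$ when $\sqrt{(1-\Delta_{1,i})(1-\Delta_{2,i})}\ge d_i$ and the boundary value $\widehat d_i=d_i-\sqrt{(1-\Delta_{1,i})(1-\Delta_{2,i})}$ otherwise; this is exactly \eqref{coref}, and on the latter ``active'' coordinates the semidefinite constraint $Q_{(X_1,X_2)}\succeq\Sigma_{(E_1,E_2)}$ binds. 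Since the $d_i$ are ordered decreasingly (Remark~\ref{rem:cvf}), the active coordinates are the first $\kappa$, justifying the indexing $i\in\mathbb{Z}_+^\kappa$.

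Third, I would substitute \eqref{coref}, so that $F_i:=\Delta_{1,i}\Delta_{2,i}-\widehat d_i^2$ equals $\Delta_{1,i}+\Delta_{2,i}-1-d_i^2+2d_i\sqrt{(1-\Delta_{1,i})(1-\Delta_{2,i})}$ on active coordinates and $\Delta_{1,i}\Delta_{2,i}$ on inactive ones, and form the Lagrangian in $\{\Delta_{1,i},\Delta_{2,i}\}$ with multipliers $\mu_1,\mu_2\ge0$ for the two trace budgets. A direct differentiation gives $\partial_{\Delta_{t,i}}\log F_i=\bigl(1-d_i\sqrt{(1-\Delta_{s,i})/(1-\Delta_{t,i})}\bigr)/F_i$ on active coordinates and $1/\Delta_{t,i}$ on inactive ones, and stationarity equates each to $2\mu_t$. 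On inactive coordinates this forces the common water level $\Delta_{t,i}=(\Delta_t-\sum_{i=1}^\kappa\Delta_{t,i})/(n-\kappa)$, i.e.\ \eqref{eqsys41}; reading off the common value $2\mu_t=(n-\kappa)/(\Delta_t-\sum_{i=1}^\kappa\Delta_{t,i})$ and inserting it into the active-coordinate stationarity yields the coupled nonlinear system \eqref{eqsys11}. When $\kappa=n$ there are no inactive coordinates to pin $\mu_t$, so the multipliers $\lambda_t$ survive explicitly, giving \eqref{eq:nonsysteml1}--\eqref{eq:nonsysteml2}; the $\ell$ coordinates that saturate at $\Delta_{1,i}=\Delta_{2,i}=1,\ \widehat d_i=d_i$ (where the pseudoinverse branch of Lemma~\ref{lem_schur} activates and the rate contribution $\tfrac12\log\tfrac{1-d_i^2}{1-d_i^2}$ vanishes) are peeled off first. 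Finally, complementary slackness identifies the regimes: budget $\Delta_s$ becomes inactive ($\mu_s=0$) precisely when its stationarity forces the free optimum $\Delta_{s,i}=1-d_i^2(1-\Delta_{t,i})$ together with the water-filled $\Delta_{t,i}=\min\{1,\Delta_t/n\}$; summing $\Delta_{s,i}$ over $i$ shows this occurs exactly on $\mathcal{M}_s$, producing Cases~(d)--(e), whereas $(\Delta_1,\Delta_2)\notin\mathcal{M}_1\cup\mathcal{M}_2$ keeps both budgets active, giving Cases~(a)--(c) according to the value of $\kappa$.

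The step I expect to be the real obstacle is the bookkeeping at the degenerate boundaries: showing that Lemma~\ref{lem_schur} partitions the coordinates cleanly into the three types (interior with $\widehat d_i=0$, binding with $\widehat d_i=d_i-\sqrt{\cdots}$, and saturated with $\Delta_{1,i}=\Delta_{2,i}=1$), that the integers $(\kappa,\ell)$ are well defined and compatible with the decreasing order of $\{d_i\}$, and that the $\mathcal{M}_1,\mathcal{M}_2$ thresholds are exactly the values at which $\mu_1,\mu_2$ change sign. Once the candidate in each case is checked to be feasible (in particular $\widehat d_i\in[0,1)$ and $\Delta_{2,i}\in[0,1]$) with nonnegative multipliers, sufficiency of KKT closes the argument.
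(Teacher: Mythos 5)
Your proposal is correct and takes essentially the same route as the paper, whose entire proof is the one-line ``by application of the KKT conditions on Problem~\ref{prob-main}'': you eliminate $\widehat{d}_i$ via the Schur-complement feasibility interval to get \eqref{coref}, impose stationarity of the Lagrangian in $\{\Delta_{1,i},\Delta_{2,i}\}$, and use complementary slackness on the trace budgets to separate Cases (a)--(c) from (d)--(e), exactly as intended. One remark: your stationarity computation actually yields \eqref{eqsys11} with a \emph{minus} sign between the two terms (which is the version consistent with the paper's own \eqref{eq:nonsysteml1} after substituting $\lambda_t=\tfrac{n-\kappa}{2(\Delta_t-\sum_{i=1}^{\kappa}\Delta_{t,i})}$), so the ``$+$'' in the stated \eqref{eqsys11} appears to be a sign typo in the paper rather than a gap in your argument.
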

\begin{proof} \ES{By application of the KKT conditions on Prob.~\ref{prob-main}}.\vspace{-0.4cm}
\end{proof}
% The parameters $(\kappa,\ell)$ can be determined by iteratively solving the non-linear equations of Theorem~\ref{thm:jointRDFGeneralCVF}. To be precise, one should check if given $(\Delta_1,\Delta_2) \in \mathcal{M}_i,i=1,2$, if they are, the solution is given by Case (c) or (d) in Theorem~\ref{thm:jointRDFGeneralCVF} otherwise one should set $\kappa = 0$ check if Case (a) is feasible according to 
% For completeness, in Appendix~\ref{app:algorithm}, we present a simple iterative algorithm designed for computing the optimal distortion allocation according to Theorem~\ref{thm:CovJoint}. 
% We explain how one can compute the parameter $(\kappa,\ell)$ given the distortion pair $(\Delta_1,\Delta_2)$ at a later point in this section.
It is evident that obtaining a closed-form solution to the non-linear equations of Theorem~\ref{thm:CovJoint} may not be possible, indicating the complexity of the optimization problem even when $(X_1,X_2)$ are in CVF. This challenges the pursuit of a closed-form solution when $(X_1,X_2)$ is not in CVF, which involves matrices $\Sigma_{E_{i}}, i=1,2$ and $\Sigma_{E_1,E_2}$ that are not~diagonal.
%for the general case involving general Gaussian RVs. 
\ES{In general, consider the following remark.
\begin{remark}
    The problem of calculating the RDF and optimal encodings is challenging  because no closed-form solution for optimal encodings is known. Therefore, the class of Blahut-Arimoto algorithms was developed for the numerical calculation of the relevant quantities \cite{blahut1972computation}. However, in \cite{boche2023algorithmic} it was shown that for Blahut-Arimoto algorithms for the calculation of Shannon capacity, the optimal input distribution is not Turing computable. In \cite{lee2023computability} the impossibility of Turing computability of optimisers for a larger class of information-theoretic tasks was shown. It is an interesting open question to show this impossibility of Turing computability of optimal encoders for the RDF as well.
\end{remark}}

Subsequently, we provide the expressions of the joint RDF. 
% for each case outlined in Theorem~\ref{thm:CovJoint}, accompanied by the respective regions in which these expressions hold.
% follow the implicit expressions of the joint RDF 
 % for any pair~$(\Delta_1,\Delta_2)~\in~\mathbb{R}_+^2$.}
 
\begin{theorem}\label{thm:jointRDFGeneralCVF}
Consider the statements of Theorem \ref{thm:CovJoint} and assume that $( \kappa,\ell)$ are known. The joint RDF is given by, \\
Case (a):  If $(\Delta_1,\Delta_2) \notin \mathcal{M}_k$ for $k=1,2$ and $\kappa = 0$, then for $(\Delta_1,\Delta_2) \in {\mathcal{D}}_{0}$, \vspace{-0.1cm}
\begin{align}
&R_{X_1,X_2}(\Delta_1,\Delta_2)   =  \frac{1}{2}\sum_{i=1}^n \log \bigg ( \frac{n^2(1-d_i^2)}{\Delta_{1}\Delta_{2} }\bigg ), \nonumber
\end{align}
where\vspace{-0.1cm}
\begin{align}
{\mathcal{D}}_{0}=   \Big \{&  (\Delta_{1},\Delta_{2}) \in \mathbb{R}_+^2\Big |\;  d_1^2 \le (1-\frac{\Delta_1}{n})(1-\frac{\Delta_2}{n}), \Delta_{1}, \Delta_{2}<n \Big \}. \nonumber
\end{align}
Case (b): If $(\Delta_1,\Delta_2) \notin \mathcal{M}_k$ for $k=1,2$ and $\kappa < n$, then for $(\Delta_1,\Delta_2) \in {\mathcal{D}}_{\kappa}$,\vspace{-0.1cm}
\begin{align}
&R_{X_1,X_2}(\Delta_1,\Delta_2)   =  \frac{1}{2}\sum_{i=\kappa+1}^n \log \bigg ( \frac{1-d_i^2}{\Delta_{1,i}\Delta_{2,i} }\bigg ) \nonumber \\ &+\frac{1}{2} \sum_{i=1}^\kappa \log \bigg ( \frac{(1-d_i^2)}{\Delta_{1,i}\Delta_{2,i} -\big ( d_{i} - \sqrt{(1-\Delta_{1,i})(1-\Delta_{2,i})}\big )^2 }\bigg ) ,\nonumber 
\end{align}
where\vspace{-0.1cm}
\begin{align}
{\mathcal{D}}_{\kappa}=   \Big \{&  (\Delta_{1},\Delta_{2}) \in \mathbb{R}_+^2\Big |\;  d_j^2 \le (1-\Delta_{1,j})(1-\Delta_{2,j}),\;\forall i \in \mathbb{Z}_+^{n}\setminus\mathbb{Z}_+^{\kappa},\nonumber\\ &  (1-\Delta_{1,i})(1-\Delta_{2,i}) \le d_{i}^2 \le \min \Big \{ \frac{(1-\Delta_{1,i})}{(1-\Delta_{2,i})},\frac{(1-\Delta_{2,i})}{(1-\Delta_{1,i} )}\Big \},\nonumber \\ &\;\forall i \in \mathbb{Z}_+^{\kappa},\;\; \Delta_{1,i}<1,\; \Delta_{2,i}<1,\;\forall i\in \mathbb{Z}_+^{n}\Big \}, \nonumber
\end{align}
where $\{\Delta_{1,i},\Delta_{2,i}\}_{i=1}^n$ are obtained by the solution of the non-linear equations \eqref{eqsys11} and \eqref{eqsys41}.\\
Case (c): If $(\Delta_1,\Delta_2) \notin \mathcal{M}_k$ for $k=1,2$, $\kappa = n$ and $\ell\ge 0$, then for $(\Delta_1,\Delta_2) \in~\widehat{\mathcal{D}}_\ell$, \vspace{-0.1cm}
\begin{align}
R_{X_1,X_2}&(\Delta_1,\Delta_2)   =  \nonumber \\  & \hspace{-0.5cm}\frac{1}{2} \sum_{i=1}^{n-\ell} \log \bigg ( \frac{(1-d_i^2)}{\Delta_{1,i}\Delta_{2,i} -\big ( d_{i} - \sqrt{(1-\Delta_{1,i})(1-\Delta_{2,i})}\big )^2 }\bigg ),\nonumber 
\end{align}
where \vspace{-0.1cm}
\begin{align}
\widehat{\mathcal{D}}_\ell  = \big  \{ & (\Delta_{1},\Delta_{2}) \in \mathbb{R}_+^2 \Big |\;d_{i}^2 \ge (1-\Delta_{1,i})(1-\Delta_{2,i}),\;\Delta_{1,i},\Delta_{2,i}<1,\nonumber\\ &d_{i}^2 
 \le \min \Big \{ \frac{(1-\Delta_{1,i})}{(1-\Delta_{2,i})},\frac{(1-\Delta_{2,i})}{(1-\Delta_{1,i} )}\Big \}, \;\forall i\in  \mathbb{Z}_+^{n-\ell}\big \},\nonumber
\end{align}
where $\{\Delta_{1,i},\Delta_{2,i}\}_{i=1}^n$ are obtained by the solution of the non-linear equations \eqref{eq:nonsysteml1} and \eqref{eq:nonsysteml2}.\\
Case (d): If $(\Delta_1,\Delta_2) \in \mathcal{M}_1$, then, $\Delta_{2,i} = \min \big \{1,\frac{\Delta_2}{n} \big\}$ and, \vspace{-0.1cm}
\begin{align}
R_{X_1,X_2}(\Delta_{1},\Delta_{2})  = R_{X_2}(\Delta_{2}) = \frac{1}{2} \sum_{i=1}^n\log \bigg ( \frac{1}{\Delta_{2,i}} \bigg ).\;\;\nonumber
\end{align}
%where,
%\begin{align}
%{\mathcal{M}}_{1} = \Big \{  (\Delta_{1},\Delta_{2}) \in [0,\infty]\times[0,\infty] \Big | \;\; \Delta_{1} > n - \sum_{i=1}^n d_i^2(1-\Delta_{2,i}),\;\;\Delta_{2,i} = \min\Big \{1, \frac{\Delta_2}{n},\Big \}, \;\forall i \in  \mathbb{Z}_+^{n}  \Big \}\label{region_case_c}
%\end{align}
Case (e): If $(\Delta_1,\Delta_2) \in \mathcal{M}_2$, then, $\Delta_{1,i} = \min \big \{1,\frac{\Delta_1}{n} \big\}$, and, \vspace{-0.1cm}
\begin{align}
R_{X_1,X_2}(\Delta_{1},\Delta_{2})= R_{X_1}(\Delta_{1}) = \frac{1}{2} \sum_{i=1}^n\log \bigg( \frac{1}{\Delta_{1,i}} \bigg ) \nonumber.
\end{align}
%where,
%\begin{align}
%\mathcal{M}_{2} = \Big \{  (\Delta_{1},\Delta_{2}) \in [0,\infty]\times[0,\infty] \Big |\;\;  \;\; \Delta_{2} > n - \sum_{i=1}^n d_i^2(1-\Delta_{1,i}),\;\;\Delta_{1,i} = \min\Big \{1, \frac{\Delta_1}{n}\Big \},\;\forall i \in  \mathbb{Z}_+^{n} \Big \} \label{region_case_d}
%\end{align}
\end{theorem}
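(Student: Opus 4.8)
The plan is to read Theorem~\ref{thm:jointRDFGeneralCVF} as a direct evaluation of the objective of Problem~\ref{prob-main} at the optimizer supplied by Theorem~\ref{thm:CovJoint}. Every summand of that objective has the form $\tfrac12\log\!\big((1-d_i^2)/(\Delta_{1,i}\Delta_{2,i}-\widehat d_i^2)\big)$, so in each case it suffices to substitute the optimal $\{\Delta_{1,i},\Delta_{2,i},\widehat d_i\}$ into the denominator $\Delta_{1,i}\Delta_{2,i}-\widehat d_i^2$ and simplify, leaving the numerator $1-d_i^2$ untouched. I would organize the computation along the five cases of Theorem~\ref{thm:CovJoint}.

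For Case (a), all $\widehat d_i=0$ and $\Delta_{j,i}=\Delta_j/n$, so each denominator equals $\Delta_1\Delta_2/n^2$ and the summand becomes $\tfrac12\log\big(n^2(1-d_i^2)/(\Delta_1\Delta_2)\big)$, matching the claim on $\mathcal D_0$. For Case (b) I would split $\mathbb Z_+^n$ into $\mathbb Z_+^\kappa$ and its complement: on the complement $\widehat d_i=0$ by \eqref{coref}, giving denominator $\Delta_{1,i}\Delta_{2,i}$, while on $\mathbb Z_+^\kappa$ I insert $\widehat d_i=d_i-\sqrt{(1-\Delta_{1,i})(1-\Delta_{2,i})}$, which reproduces the denominator $\Delta_{1,i}\Delta_{2,i}-(d_i-\sqrt{(1-\Delta_{1,i})(1-\Delta_{2,i})})^2$ verbatim. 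The remaining $\{\Delta_{1,i},\Delta_{2,i}\}$ stay implicit as the solution of \eqref{eqsys11}--\eqref{eqsys41}; no further reduction is attempted, which is precisely why the formula is called implicit.

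Cases (c)--(e) contain the only genuine algebra. In Case (c) the key observation is that the $\ell$ saturated quadruples have $\Delta_{1,i}=\Delta_{2,i}=1$ and $\widehat d_i=d_i$, so their denominator is $1-d_i^2$, exactly the numerator; each such term contributes $\tfrac12\log 1=0$ and drops out, leaving the stated sum over $\mathbb Z_+^{n-\ell}$ with $\{\Delta_{1,i},\Delta_{2,i}\}$ determined by \eqref{eq:nonsysteml1}--\eqref{eq:nonsysteml2}. In Case (d) I substitute $\Delta_{1,i}=1-d_i^2(1-\Delta_{2,i})$ and $\widehat d_i=d_i\Delta_{2,i}$ and compute $\Delta_{1,i}\Delta_{2,i}-\widehat d_i^2=\Delta_{2,i}-d_i^2\Delta_{2,i}=\Delta_{2,i}(1-d_i^2)$, so the factor $1-d_i^2$ cancels and the summand collapses to $\tfrac12\log(1/\Delta_{2,i})$; summing yields the single-source reverse water-filling formula $R_{X_2}(\Delta_2)$ with $\Delta_{2,i}=\min\{1,\Delta_2/n\}$, the latter being the reverse water-filling solution for the unit-variance spectrum of $X_2$ in CVF. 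Case (e) is identical with the indices $1$ and $2$ interchanged, giving $R_{X_1}(\Delta_1)$.

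The main obstacle is not the substitution but the region bookkeeping: I must verify that each parameter assignment from Theorem~\ref{thm:CovJoint} is simultaneously feasible and optimal exactly on the region claimed ($\mathcal D_0$, $\mathcal D_\kappa$, $\widehat{\mathcal D}_\ell$, $\mathcal M_1$, $\mathcal M_2$) and that these regions tile the distortion plane consistently. Concretely, the boundary between $\widehat d_i=0$ and $\widehat d_i>0$ is governed by the sign of $d_i^2-(1-\Delta_{1,i})(1-\Delta_{2,i})$, the upper bound $d_i^2\le\min\{(1-\Delta_{1,i})/(1-\Delta_{2,i}),(1-\Delta_{2,i})/(1-\Delta_{1,i})\}$ guarantees $\widehat d_i\in[0,1)$ and keeps the Schur constraints of Lemma~\ref{lem_schur} satisfied, and the thresholds defining $\mathcal M_1,\mathcal M_2$ mark where one distortion constraint becomes inactive so that the joint problem degenerates to a marginal RDF. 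Checking that these inequalities are exactly the feasibility conditions attached to the KKT solution, rather than re-deriving the optimizer, is the delicate step, and I would carry it out index-by-index using the stationarity relations already established in the proof of Theorem~\ref{thm:CovJoint}.
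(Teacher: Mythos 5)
Your proposal is correct and takes essentially the same approach as the paper: the paper's entire proof of Theorem~\ref{thm:jointRDFGeneralCVF} is the statement that it follows from Theorem~\ref{thm:CovJoint}, i.e., by substituting the optimizing parameters $\{\Delta_{1,i},\Delta_{2,i},\widehat{d}_i\}$ into the objective of Problem~\ref{prob-main}, which is exactly what you carry out case by case. Your explicit algebra --- e.g., the identity $\Delta_{1,i}\Delta_{2,i}-\widehat{d}_i^{\,2}=\Delta_{2,i}(1-d_i^2)$ in Case (d), the vanishing of the $\ell$ saturated terms in Case (c), and the region bookkeeping tying each formula to $\mathcal{D}_0$, $\mathcal{D}_\kappa$, $\widehat{\mathcal{D}}_\ell$, $\mathcal{M}_1$, $\mathcal{M}_2$ --- simply fills in the details the paper leaves implicit.
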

\begin{proof} Follows from Theorem \ref{thm:CovJoint}.  \vspace{-0.5cm}
\end{proof}

It is evident from Theorem~\ref{thm:CovJoint} and Theorem~\ref{thm:jointRDFGeneralCVF} that the joint RDF follows a distinct water-filling process on both the distortions $\{\Delta_{1,i},\Delta_{2,i}\}_{i=1}^n$ and the correlation coefficients $\{\widehat{d}_i\}_{i=1}^n$. The calculation of the joint RDF initiates by evenly distributing the distortion among all components and setting $\kappa = 0$ (Case (a)). If this allocation is infeasible based on the corresponding region in Theorem~\ref{thm:jointRDFGeneralCVF}, we introduce $\kappa = 1$ correlation coefficient using \eqref{coref}. Additionally, we include two non-linear equations as per \eqref{eqsys11}, and recalculate the distortions (Case (b)). This iterative process continues until a feasible solution is found or $\kappa = n$. If $\kappa = n$ and a feasible solution remains elusive, then we set $\ell = 1$, $\Delta_{1,n} = \Delta_{2,n} = 1$, and $\widehat{d}_n = d_n$ (Case (c)). Subsequently, we eliminate the non-linear equations for $i=n$, solving \eqref{eq:nonsysteml1} and \eqref{eq:nonsysteml1}. The process is reiterated by incrementing $\ell$ and checking the feasibility of the solution. This iterative procedure reveals the presence of two hidden water-filling thresholds embedded within the non-linear equations. 
% The values of $(\kappa,\ell)$ can also be determined through this process.

Moreover, it can be seen from Cases (d) and (e) that, when the difference between the distortions $\Delta_1$ and $\Delta_2$ is sufficiently large, the joint RDF simplifies to the Shannon RDFs. The extent of this difference is determined by the correlation coefficients $\{d_i\}_{i=1}^n$. Consequently, due to the correlation between the two RVs, it becomes feasible to compress only one of the RVs and utilize its reconstruction to estimate the other. This approach yields a reduced distortion for the other~RV. 

Next, we show that {for $n=1$,   the joint RDF in Theorem~\ref{thm:jointRDFGeneralCVF} reproduces the known joint RDF in \cite{xiao}}.

\begin{proposition}
% \begin{align}
%&Q_{(X_{1}, X_{2})} =  \begin{pmatrix} 1 & d  \\   d  & 1  \end{pmatrix}, \hst  d \in (0,1), \hso d_{X_i}(x_i,\widehat{x_i}) = (x_i-\widehat{x}_i)^2,\hso i=1,2 \label{prob_3_scalar-cvf}   
%\end{align}
%The joint RDF of Theorem~\ref{thm:jointRDFGeneralCVF} is given by 
% \begin{align}
%R_{X_1^c,X_2^c}(\Delta_1,\Delta_2)= \left\{ \begin{array}{lll} \frac{1}{2} \log \frac{1}{\Delta_1} & \mbox{if} & d^2 \ge \frac{1-\Delta_2}{1-\Delta_1}, \\
%\frac{1}{2} \log \frac{1-d^2}{\Delta_1 \Delta_2}, & \mbox{if} &d^2 \le (1-\Delta_1)(1-\Delta_2),\\
%\frac{1}{2} \log \frac{1-d^2}{\Delta_1 \Delta_2 - \big ( d - \sqrt{(1-\Delta_1)(1-\Delta_2)}\big )^2},    & \mbox{if} & (1-\Delta_1)(1-\Delta_2) \le d^2 \le \frac{1-\Delta_2}{1-\Delta_1}
%\end{array} \right.   \label{jrfd_xl_cvf}
%\end{align}
 The joint RDF of Theorem~\ref{thm:jointRDFGeneralCVF} for $n=1$ is equal to the joint RDF of \cite{xiao} given by\vspace{-0.1cm}
 \begin{align}
&R_{X_1,X_2}(\Delta_1,\Delta_2)=\label{jrdf_xl}
\\& \begin{dcases}
    \frac{1}{2} \log^+ \Big ( \frac{1-d^2}{\Delta_1\Delta_2}\Big ) & (\Delta_1,\Delta_2)\in \mathcal{D}^*_1\\
\frac{1}{2} \log^+ \Big (\frac{1-d^2}{\Delta_1\Delta_2 -  ( d - \sqrt{(1-\Delta_1)(1-\Delta_2)} )^2}\Big ) &  (\Delta_1,\Delta_2)\in \mathcal{D}^*_2\\
\frac{1}{2} \log^+ \Big ( \frac{1}{\min \{ \Delta_1, \Delta_2\}}\Big ) & (\Delta_1,\Delta_2)\in \mathcal{D}^*_3
\end{dcases},
\nonumber  
\end{align}
where $\log^+(x) \tri \max(0,\log(x))$ and \vspace{-0.1cm}
\begin{align*}
    &\mathcal{D}^*_1 = \{(\Delta_1,\Delta_2)\in \mathbb{R}_+^2|\;d^2 \le (1-\Delta_1)(1-\Delta_2)\}\\
    &\mathcal{D}^*_3 = \Big \{(\Delta_1,\Delta_2)\in \mathbb{R}_+^2|\;d^2 \ge \min \Big \{ \frac{1-\Delta_2}{1-\Delta_1},\frac{1-\Delta_1}{1-\Delta_2} \Big\}\Big \},\\  & \mathcal{D}^*_2 =(\mathcal{D}^*_1)^\mathrm{c}\cup (\mathcal{D}^*_2)^\mathrm{c}.
\end{align*}
% where $\log^+(x) \tri \max(0,\log(x))$, $r_i = \Delta_i/\sigma_{X_i}^2$ and $\Delta_i \in [0,\infty), i=1,2$. 
%   by Theorem \ref{thm:CovJoint} if we set $k=0$ and $k=1$ combine with $\Delta_1 \ge 1-d^2(1-\Delta_1)$ or $k=1$ combine with $\Delta_2 \ge 1-d^2(1-\Delta_2)$ we identify the three distinct regions of \cite[Theorem 6]{xiao}. Therefore this verifies our results.
\end{proposition}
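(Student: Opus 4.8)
The plan is to specialize Theorem~\ref{thm:jointRDFGeneralCVF} to $n=1$ and verify that its five-way case distinction collapses onto the three-region formula \eqref{jrdf_xl} of \cite{xiao}. With $n=1$ there is a single canonical coefficient $d=d_1\in(0,1)$ and a single distortion pair $(\Delta_{1,1},\Delta_{2,1})$; since nothing is water-filled across components, the per-letter distortions coincide with the totals (truncated at the unit variance), i.e.\ $\Delta_{1,1}=\Delta_1$ and $\Delta_{2,1}=\Delta_2$ whenever these are below $1$. First I would record which cases degenerate: Case (b) requires $\kappa<n=1$, hence $\kappa=0$, and since \eqref{eqsys11} is then indexed over $\mathbb{Z}_+^0=\emptyset$ while \eqref{eqsys41} forces $\Delta_{j,1}=\Delta_j$, Case (b) coincides with Case (a); likewise in Case (c) only $\ell\in\{0,1\}$ is admissible, with $\ell=1$ producing an empty sum and hence zero rate. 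Thus the genuinely distinct regimes are Case (a), Case (c) with $\ell=0$, and Cases (d)/(e).

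Next I would match regions and formulas one regime at a time. For Case (a), setting $n=1$ in $\mathcal{D}_0$ gives exactly $d^2\le(1-\Delta_1)(1-\Delta_2)$ with $\Delta_1,\Delta_2<1$, which is $\mathcal{D}^*_1$, and the rate $\tfrac12\log\big(n^2(1-d^2)/(\Delta_1\Delta_2)\big)$ collapses to $\tfrac12\log\big((1-d^2)/(\Delta_1\Delta_2)\big)$, matching the first branch. For Case (c) with $\ell=0$, the region $\widehat{\mathcal D}_0$ reads $(1-\Delta_1)(1-\Delta_2)\le d^2\le\min\{\tfrac{1-\Delta_1}{1-\Delta_2},\tfrac{1-\Delta_2}{1-\Delta_1}\}$ with $\Delta_1,\Delta_2<1$, which is precisely the middle region $\mathcal{D}^*_2$, and the single summand reproduces the square-root–corrected expression of the second branch.

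For the third branch I would show $\mathcal{M}_1\cup\mathcal{M}_2$ at $n=1$ equals $\mathcal{D}^*_3$. Writing $a=1-\Delta_1$, $b=1-\Delta_2$, the defining inequality of $\mathcal{M}_2$, namely $\Delta_2>1-d^2(1-\min\{1,\Delta_1\})$, rearranges (for $\Delta_1<1$) to $d^2\ge b/a=\tfrac{1-\Delta_2}{1-\Delta_1}$, which, combined with $b\le a$ (equivalently $\Delta_1\le\Delta_2$), is the branch of $\mathcal{D}^*_3$ on which $\min\{\Delta_1,\Delta_2\}=\Delta_1$; by symmetry $\mathcal{M}_1$ yields the branch with $\min=\Delta_2$. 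On these pieces Cases (e) and (d) return $\tfrac12\log(1/\min\{1,\Delta_1\})$ and $\tfrac12\log(1/\min\{1,\Delta_2\})$, i.e.\ $\tfrac12\log\big(1/\min\{\Delta_1,\Delta_2\}\big)$, matching the third branch.

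Finally I would account for the operator $\log^+$, absent from Theorem~\ref{thm:jointRDFGeneralCVF}. The truncations $\min\{1,\Delta_i\}$ in Cases (d)/(e) and the empty-sum value $0$ obtained at $\ell=1$ in Case (c) are exactly the situations where reconstruction is attainable at zero rate (distortion at or above the unit variance); I would check that on each region the logarithm's argument is $\ge1$ away from these degenerate corners, so that replacing $\log$ by $\log^+$ changes nothing except to encode the zero-rate regime uniformly. Assembling the three matched branches with this $\log^+$ bookkeeping then gives \eqref{jrdf_xl}. The main obstacle is the region algebra of the last step: verifying that $\mathcal{M}_1,\mathcal{M}_2$ together with the complement structure defining $\mathcal{D}^*_1,\mathcal{D}^*_2,\mathcal{D}^*_3$ partition $\mathbb{R}^2_+$ consistently, including the $\Delta_i\ge1$ corners, since the two formulations describe the same boundaries through superficially different inequalities.
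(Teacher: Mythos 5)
Your proposal is correct and follows essentially the same route as the paper's own proof: specialize Theorem~\ref{thm:jointRDFGeneralCVF} to $n=1$ and match its cases, region by region, to the three branches of the formula in \cite{xiao}. If anything, you are more careful than the paper: you correctly observe that for $n=1$ the middle region must be reached through Case (c) with $\ell=0$ (since Case (b) requires $\kappa<n$, forcing $\kappa=0$ and collapsing it onto Case (a)), whereas the paper loosely attributes that region to Case (b) with $\mathcal{D}_1$; you also make the $\log^+$ bookkeeping and the degenerate corners $\Delta_i\ge 1$ explicit, which the paper's proof leaves implicit.
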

\begin{proof} Suppose $(\Delta_1,\Delta_2)\in \mathcal{M}_2$ (Thm.~\ref{thm:jointRDFGeneralCVF}, Case (e)) and suppose $\Delta_2 \ge \Delta_1$. It follows from $\mathcal{M}_2$ that $d^2 > \frac{1-\Delta_2}{1-\Delta_1}$. Next, suppose $(\Delta_1,\Delta_2)\in \mathcal{M}_1$ (Thm.~\ref{thm:jointRDFGeneralCVF}, Case (d)) and suppose $\Delta_1 \ge \Delta_2$, similarly, $d^2  > \frac{1-\Delta_1}{1-\Delta_2}$. Consequently, by using the expression of the joint RDF in Case (e) and (d) of Thm.~\ref{thm:jointRDFGeneralCVF} its obvious that they correspond to the third region in \eqref{jrdf_xl}. Next, suppose that $(\Delta_1,\Delta_2)\in \mathcal{D}_0$ (Thm.~\ref{thm:jointRDFGeneralCVF}, Case (a)), then it follows that $d^2<(1-\Delta_1)(1-\Delta_2)$. In this case, the joint RDF is equal to the first part of~\eqref{jrdf_xl}. Finally, suppose that $(\Delta_1,\Delta_2)\in \mathcal{D}_1$ (Thm.~\ref{thm:jointRDFGeneralCVF}, Case (b)), then its obvious that for this region the joint RDF is equal to the second part of (\ref{jrdf_xl}). \vspace{-0.1cm}
\end{proof} 

Furthermore, we demonstrate the equivalence of the solutions obtained using Theorem~\ref{thm:jointRDFGeneralCVF} and the semi-definite programming (SDP) formulation of the joint RDF using CVX \cite{cvx}. 
Additionally, we illustrate the water-filling processes.

\begin{example}
Consider $X_i:\Omega\rightarrow\mathbb{R}^2,i=1,2$ with a covariance matrix $Q_{\mathrm{cvf}}\in\mathbb{R}^{4\times 4}$ with $D = \diag (0.588,0.271)$. To solve Problem~\ref{prob-main}, we use the SDP formulation of \cite{stylianou2021joint} or iteratively solving the non-linear equations in Theorem~\ref{thm:CovJoint}. \\
Case 1:  For distortions $(\Delta_1,\Delta_2) = (0.3,0.2)$, \vspace{-0.1cm}
\begin{align*}
    \Sigma_{(E_1,E_2)} = \diag(0.15,0.15,0.1,0.1).
\end{align*}
Given that $d_1^2 \le (1-\Delta_1/2)(1-\Delta_2/2)$, the distortions are evenly distributed along the diagonal elements of both $\Sigma_{E_1}$ and $\Sigma_{E_2}$. This precisely corresponds to Case (a) of Theorem~\ref{thm:CovJoint}.\\
Case 2:   For distortions $(\Delta_1,\Delta_2) = (1.3,1.2)$,
\begin{align*}
    \Sigma_{(E_1,E_2)}  =   \begin{pmatrix}
     0.5692  &  0 &\vline&   0.1414 &  0 \cr
    0 &  0.7308 &\vline&   0 &  0 \cr
    \hline
    0.1414 &  0 &\vline&   0.5381 &  0 \cr
    0 &  0 & \vline&  0 &  0.6619
  \end{pmatrix}.
\end{align*}
Here, we investigate Case (b) of Theorem~\ref{thm:jointRDFGeneralCVF} with $\kappa=1$. The introduction of a correlation coefficient ${\widehat{d}_1} = 0.1414$ is a direct consequence of the increased distortion. Further distortion increments will lead to the introduction of additional correlation coefficients. \\
% It is worth noting that $\Delta_{i,2} > \Delta_{i,1}$ for $i=1,2$, given that $d_{1} > d_2$.  More distortion is allocated to components with lower correlation. As previously discussed, this process is analogous to the classical water-filling algorithm.\\ 
Case 3: For distortions $(\Delta_1,\Delta_2) = (0.5,1.7)$,\vspace{-0.1cm}
\begin{align*}
    \Sigma_{(E_1,E_2)}  =   \begin{pmatrix}
    0.2500 &  0 &\vline&   0.1469 &  0 \cr
    0 &  0.2500 &\vline&   0 &  0.0677 \cr
    \hline
    0.1469 &  0 &\vline&   0.7411 &  0 \cr
    0 &  0.0677 & \vline&  0 &  0.9449
  \end{pmatrix}.
\end{align*}
In this case, $(\Delta_1,\Delta_2)\in \mathcal{M}_1$ hence the joint RDF degenerates to $R_{X_1}(\Delta_1)$. We see that $\Delta_1$ is equally divided along $\Sigma_{E_1}$ and $\Delta_{2,i},\widehat{d}_i,\:i=1,2$ are given by Case (d) of Theorem~\ref{thm:CovJoint}.
\end{example}

\subsection{Water-filling solution for  symmetric distortions
}

{In this section we  apply Theorem~\ref{thm:CovJoint} and Theorem~\ref{thm:jointRDFGeneralCVF} to  the special case of symmetric distortion pairs, i.e., $\Delta_1=\Delta_2 = \Delta$. }
   \begin{figure}[t]
  \centering
    \includegraphics[height = 5.5cm, width=0.99\columnwidth]{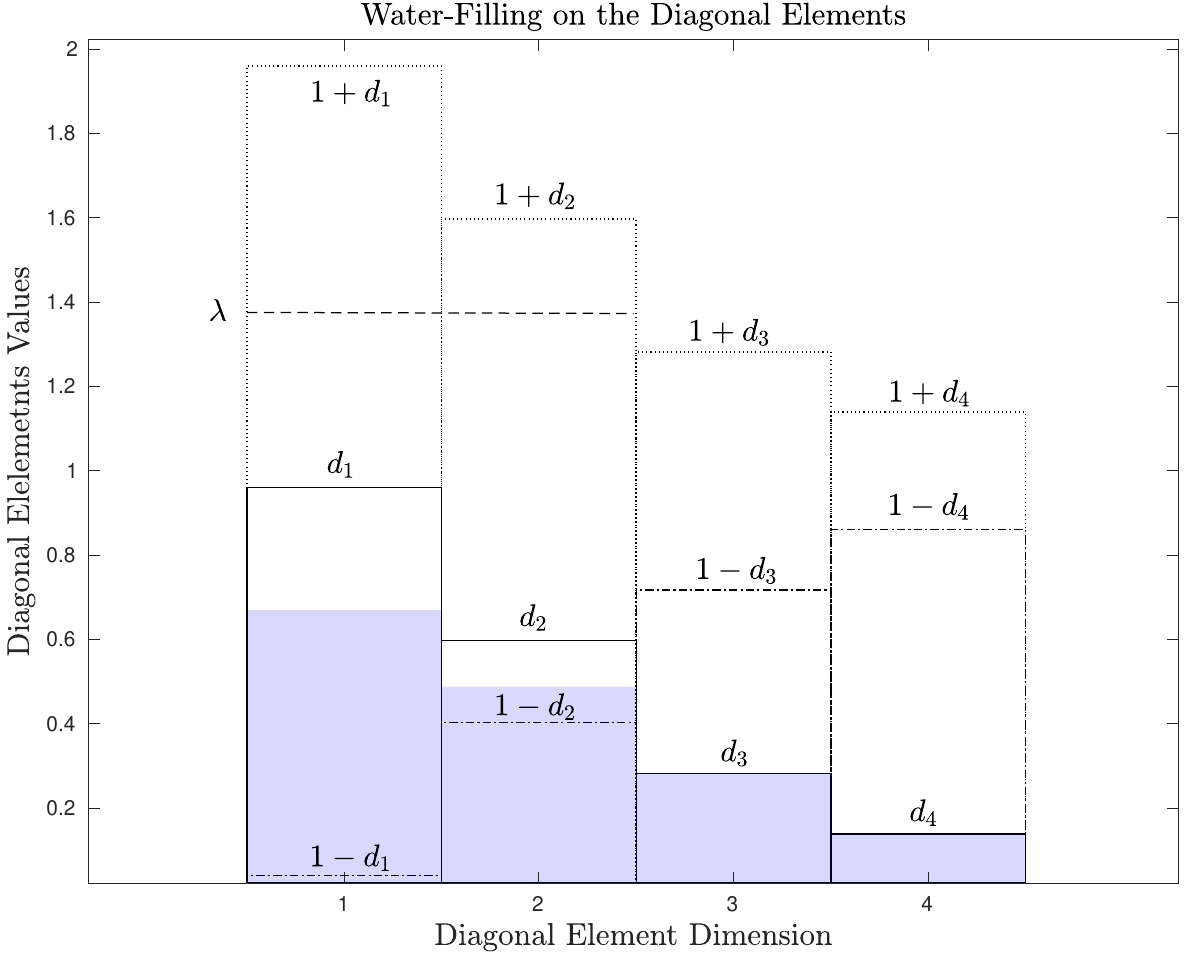}\vspace{-0.4cm}
%  \includegraphics[height =6.2cm, width=0.8\linewidth]{water_1.pdf}
%    \vspace{-0.5cm}\caption{Water-filling on the diagonal elements for $R_{X_1,X_2}(\Delta,\Delta)$ where $\Delta = 2$.}
  \label{fig:condRDF}
\caption{Water-filling solution of the correlation coefficients $\widehat{d}_i,i=1,\dots,4$ for $\Delta = 3.6$ of Example~2.}\label{fig:water1}\vspace{-0.6cm}
\end{figure}
\begin{theorem}\label{cor:jointsym}
{Consider Problem~\ref{prob-main} for symmetric distortions $\Delta_1 = \Delta_2 = \Delta$}. The joint RDF $R_{X_1,X_2}(\Delta)$ is given by\vspace{-0.1cm}
\begin{align}
 R_{X_1,X_2}(\Delta)  & =  \frac{1}{2} \sum_{i=1}^n \log \Bigg ( \frac{1-d_i^2}{\delta_{i}^2 - \widehat{d}_i^2 }\Bigg ), \nonumber
\end{align}
where\vspace{-0.1cm}
\begin{align}
&\delta_i = \begin{dcases} 
\lambda^\prime, & \text{if $\lambda^\prime < 1-d_i$}\\
\frac{\lambda^\prime + (1-d_i)}{2}, &  \text{if $1-d_i \le \lambda^\prime < 1+d_i$}\\
1 & \text{otherwise.}
\end{dcases},\label{eq:jointwater1}\\
&\widehat{d}_i = \begin{dcases} 
0, & \text{if $\lambda^\prime < 1-d_i$}\\
\frac{\lambda^\prime - (1-d_i)}{2}, &  \text{if $1-d_i \le \lambda^\prime < 1+d_i$}\\
d_i & \text{otherwise.}
\end{dcases}, \label{eq:jointwater2}
\end{align}
and $\lambda^\prime$ is computed from  $\sum_{i=1}^n\delta_{i} = \Delta$.
\end{theorem}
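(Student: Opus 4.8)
The plan is to specialize the convex program in Problem~\ref{prob-main} to the symmetric budget $\Delta_1=\Delta_2=\Delta$ and reduce it to a single-parameter reverse water-filling. First I would exploit symmetry: the covariance $Q_{\mathrm{cvf}}$ of Remark~\ref{rem:cvf} is invariant under interchanging the two blocks (as $D$ is diagonal), the objective $-\tfrac12\log\det\Sigma_{(E_1,E_2)}$ is convex on the feasible set, and the constraint $\Delta_1=\Delta_2$ is symmetric. Hence, if a minimizer exists, averaging it with its block-swapped copy is feasible and, by convexity, again optimal, so without loss of generality $\Delta_{1,i}=\Delta_{2,i}\triangleq\delta_i$ for every $i$. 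Equivalently, Theorem~\ref{thm:CovJoint} restricted to $\Delta_1=\Delta_2$ only ever invokes Cases (a)--(c): for $\Delta<n$ the condition $(\Delta,\Delta)\in\mathcal M_1$ would force $\sum_i d_i^2>n$, which is impossible since $d_i<1$, while $\Delta\ge n$ gives the trivial zero-rate solution $\delta_i=1$.

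Next I would perform the inner optimization over $\widehat d_i$ for fixed $\delta_i$. With $\Delta_{1,i}=\Delta_{2,i}=\delta_i<1$, the feasibility constraints of Problem~\ref{prob-main} collapse, via $(1-\delta_i)^\dagger=1/(1-\delta_i)$, to $|d_i-\widehat d_i|\le 1-\delta_i$, while the objective term $\tfrac12\log\frac{1-d_i^2}{\delta_i^2-\widehat d_i^2}$ decreases as the denominator grows, i.e. as $\widehat d_i^2$ shrinks toward the feasible minimum. This yields $\widehat d_i=\max\{0,\delta_i-(1-d_i)\}$: the correlation stays at $0$ while $\delta_i\le 1-d_i$ and then tracks $\delta_i-(1-d_i)$, the degenerate boundary $\delta_i=1$ (where $(1-\delta_i)^\dagger=0$ and the second equality constraint of Problem~\ref{prob-main} forces $\widehat d_i=d_i$) being the continuous limit. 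Substituting back, the denominator becomes $\delta_i^2$ in the uncorrelated regime and $(1-d_i)\bigl(2\delta_i-(1-d_i)\bigr)$ in the correlated regime, producing a per-component cost $g_i(\delta_i)$ that is convex and $C^1$ across the junction $\delta_i=1-d_i$ (both one-sided slopes equal $-1/(1-d_i)$).

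It then remains to minimize $\sum_i g_i(\delta_i)$ over $\delta_i\in[0,1]$ subject to $\sum_i\delta_i\le\Delta$. Since each $g_i$ is strictly decreasing, the budget is tight, $\sum_i\delta_i=\Delta$ (for $\Delta\ge n$ all $\delta_i=1$ and the rate is zero). Writing the Lagrangian with multiplier $\mu$ for the equality and setting the water level $\lambda'\triangleq 1/\mu$, stationarity $g_i'(\delta_i)=-\mu$ reads $\delta_i=\lambda'$ on the uncorrelated piece (where $g_i'=-1/\delta_i$) and $2\delta_i-(1-d_i)=\lambda'$ on the correlated piece (where $g_i'=-1/(2\delta_i-(1-d_i))$), with the bound $\delta_i\le 1$ active when the latter would exceed $1$. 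Comparing $\lambda'$ against the thresholds $1-d_i$ and $1+d_i$ sorts the components into the three regimes and reproduces exactly \eqref{eq:jointwater1}--\eqref{eq:jointwater2}, and $\lambda'$ is fixed by $\sum_i\delta_i=\Delta$. Convexity of the reduced problem makes these KKT points global minimizers, and evaluating the objective at the water-filling solution gives the stated formula for $R_{X_1,X_2}(\Delta)$.

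The main obstacle is the combination of the symmetry reduction and the degenerate boundary: justifying $\Delta_{1,i}=\Delta_{2,i}$ rigorously rather than merely asserting it, and handling $\delta_i=1$, where the pseudoinverse in the Schur constraint vanishes and the active-bound branch of the water-filling takes over, require the most care. Once convexity and $C^1$-continuity of $g_i$ at $\delta_i=1-d_i$ are established, the remaining reverse-water-filling algebra and the identification of the three regimes are routine.
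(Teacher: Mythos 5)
Your proposal is correct, and it reaches the theorem by a genuinely more self-contained route than the paper. The paper's own proof is a one-line specialization: it invokes Theorem~\ref{thm:CovJoint} (the general implicit KKT solution, itself justified only as ``application of the KKT conditions on Problem~\ref{prob-main}'') at $\Delta_1=\Delta_2$, so that the nonlinear stationarity equations \eqref{eqsys11}, \eqref{eq:nonsysteml1} collapse to the explicit two-threshold water-filling form. You instead work directly on Problem~\ref{prob-main}: you first justify the symmetric ansatz $\Delta_{1,i}=\Delta_{2,i}=\delta_i$ by a swap-and-average argument, using that the objective is $-\log\det$ of a linear matrix function (hence convex) and that the Schur-type constraints define a convex, swap-invariant set --- a step the paper never makes explicit, though in its route it can be replaced by the weaker observation that any KKT point of a convex program is globally optimal, so exhibiting one symmetric KKT point suffices; you then eliminate $\widehat d_i=\max\{0,\delta_i-(1-d_i)\}$ in closed form (correctly handling the degenerate boundary $\delta_i=1$, where $(1-\delta_i)^\dagger=0$ and the rank constraint forces $\widehat d_i=d_i$), and reduce to a scalar reverse water-filling for a convex, $C^1$, strictly decreasing per-component cost whose piecewise derivative $-1/\delta_i$ versus $-1/(2\delta_i-(1-d_i))$ produces exactly the thresholds $1-d_i$ and $1+d_i$ in \eqref{eq:jointwater1}--\eqref{eq:jointwater2}. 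Your route buys transparency: it avoids the $(\kappa,\ell)$ bookkeeping and the regions $\mathcal{M}_1,\mathcal{M}_2$ entirely (your observation that $(\Delta,\Delta)\in\mathcal{M}_k$ with $\Delta<n$ would force $\sum_i d_i^2>n$ is correct and disposes of Cases (d)--(e)), and it exposes why two water levels appear. The paper's route buys brevity and reuse of the general-distortion machinery. Two details worth tightening in a final write-up: the averaging step presupposes a minimizer exists (restrict to a compact sublevel set, since the objective diverges as $\delta_i^2-\widehat d_i^2\to 0$), and at the active bound $\delta_i=1$ the KKT multiplier of the constraint $\delta_i\le 1$ is $\nu_i=\tfrac{1}{1+d_i}-\tfrac{1}{\lambda'}$, which is nonnegative precisely when $\lambda'\ge 1+d_i$, confirming the third regime.
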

\begin{proof} \ES{Follows from Theorem~\ref{thm:CovJoint} for $\Delta_1 =\Delta_2$}.\vspace{-0.1cm}
\end{proof}
The water-filling process on the distortions and the correlation coefficients, as discussed in the previous section, becomes apparent when symmetric distortions are considered. 
In the first step, as described in \eqref{eq:jointwater1} and \eqref{eq:jointwater2}, a non-zero coefficient $\widehat{d}_1$ and the corresponding distortion $\delta_1$ are introduced, since $1-d_1$ is the smallest among $1-d_i$ for $i=1, \ldots, n$. This sequential process continues until reaching $\widehat{d}_n$ and $\delta_n$. In the next step, transitioning from the second to the third cases in \eqref{eq:jointwater1} and \eqref{eq:jointwater2}, a more conventional water-filling occurs. Beyond a certain distortion value, the elements $\delta_i$ and $\widehat{d}_i$ are set to their maximum values $1$ and $d_i$ respectively. In this case, the process starts from $\widehat{d}_n$ and $\delta_n$, as $1+d_i$ is the smallest for $i=n$, and sequentially progresses down to $\widehat{d}_1$ and $\delta_1$.

% In the subsequent example, we provide a concise illustration of the water-filling process for the symmetric joint RDF.
\begin{figure}[!t]
\centering
  \centering
  \includegraphics[height = 5.5cm, width=0.99\columnwidth]{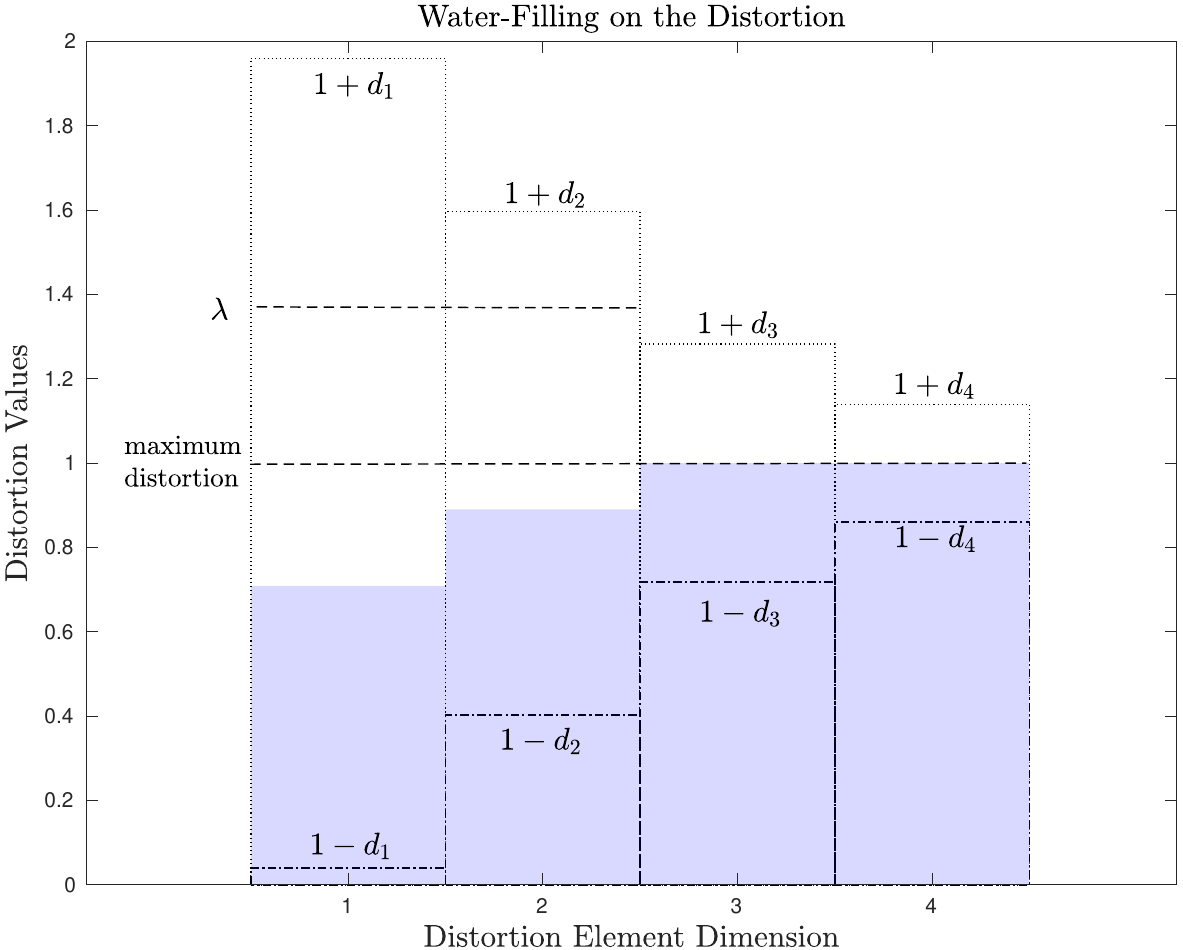}
  % \includegraphics[height =6.2cm, width=0.8\linewidth]{water_2.pdf}
% \vspace{-0.5cm}\caption{Water-filling on the distortion elements for $R_{X_1,X_2}(\Delta,\Delta)$ where $\Delta = 2$.}
\caption{Water-filling solution of the correlation coefficients $\delta_i,i=1,\dots,4$ for $\Delta = 3.6$ of Example~2.}  \label{fig:water2} \vspace{-0.6cm}
\end{figure}
\begin{example}
    Consider a tuple of Gaussian RVs $X_i: \Omega \rightarrow {\mathbb R}^4, i=1,2$ with covariance in the CVF with,\vspace{-0.1cm}
    \begin{align*}
        D = \diag \big(0.96, 0.78, 0.40,0.14\big).
    \end{align*}
% In this example, we illustrate the transition from the second to the third case in \eqref{eq:jointwater1} and \eqref{eq:jointwater2}. For this demonstration,
Let $\Delta = 3.6$, and the outcomes of this process are depicted in Figure~\ref{fig:water1} and \ref{fig:water2}. The figures reveal that for all $i \in \mathbb{Z}_+^4$, $\lambda' > (1-d_i)$, indicating that $\widehat{d}_i > 0$ for all {$i \in \mathbb{Z}_+^{4}$}. For $i=3,4$, it is evident that $\lambda > 1+d_i$, leading to $\widehat{d}_i = d_i$, and $\delta_3 = \delta_4 = 1$. Furthermore, as $\lambda < 1+d_i$ for $i=1,2$, we have $\delta_i$ and $\widehat{d}_i$ according to the second case of  \eqref{eq:jointwater1} and \eqref{eq:jointwater2}. Note that the water-filling procedure starts from index $4$ and progresses to~$1$.\vspace{-0.2cm}
\end{example}

% Note that the covariance matrix in the CVF without loss of generality has the form given in equation \eqref{ci_par_3}.
% Let us demonstrate the first water-filling step on the distortion parameters $\delta_i$ and correlation coefficient $\widehat{d}_i$, corresponding to the first two cases in \eqref{eq:jointwater1}, for the chosen value of $\Delta = 2$.  The water-filling process is shown in Figure~\ref{fig:water1}. We observe that $\lambda' > (1-d_i),\;i=1,2$, thus $\delta_i = 0.5(\lambda' + (1-d_i))$ and $\widehat{d}_i = 0.5(\lambda'- (1-d_i))$ for $i=1,2$. Additionally, since $\lambda' < (1-d_i),\;i=3,4$, we have $\delta_3 = \delta_4 = \lambda$, and $\widehat{d}_3 = \widehat{d}_4 = 0$. The second step does not apply in this case, since $\lambda' < 1+d_n$.

\section{Conclusion}

Our examination of {the optimization problem of  joint RDF for Gaussian RVs in CVF and individual square{-}error distortion functions resulted in nonlinear equations}. Furthermore, we obtained closed-form solutions akin to the well-known water-filling solution for vector Gaussian sources, particularly in scenarios involving symmetric joint RDF. This derived expression significantly contributes to advancing a comprehensive solution for joint RDF, especially for Gaussian RVs not conforming to the CVF.
\vspace{-0.1cm}
\section{Acknowledgments}
\ES{The work of E. Stylianou were supported in part by
the German Federal Ministry of Education and Research (BMBF) within the National Initiative on 6G Communication Systems through the Research Hub 6G-life under Grant 16KISK002, by the BMBF QC-CamNetz Project under Grant 16KISQ077, and by the BMBF Quiet Project under Grant 16KISQ093. The work of T. Charalambous is partly funded by MINERVA from the European Research Council (ERC) under the European Union's Horizon 2022 research and innovation programme (Grant Agreement No. 101044629).}

\balance
\bibliographystyle{IEEEtran}
\bibliography{bibliography}

% Generated by IEEEtran.bst, version: 1.14 (2015/08/26)
\begin{thebibliography}{10}
\providecommand{\url}[1]{#1}
\csname url@samestyle\endcsname
\providecommand{\newblock}{\relax}
\providecommand{\bibinfo}[2]{#2}
\providecommand{\BIBentrySTDinterwordspacing}{\spaceskip=0pt\relax}
\providecommand{\BIBentryALTinterwordstretchfactor}{4}
\providecommand{\BIBentryALTinterwordspacing}{\spaceskip=\fontdimen2\font plus
\BIBentryALTinterwordstretchfactor\fontdimen3\font minus \fontdimen4\font\relax}
\providecommand{\BIBforeignlanguage}[2]{{%
\expandafter\ifx\csname l@#1\endcsname\relax
\typeout{** WARNING: IEEEtran.bst: No hyphenation pattern has been}%
\typeout{** loaded for the language `#1'. Using the pattern for}%
\typeout{** the default language instead.}%
\else
\language=\csname l@#1\endcsname
\fi
#2}}
\providecommand{\BIBdecl}{\relax}
\BIBdecl

\bibitem{shannon2001mathematical}
C.~E. Shannon, ``A mathematical theory of communication,'' \emph{The Bell system technical journal}, vol.~27, no.~3, pp. 379--423, 1948.

\bibitem{shannon1959coding}
C.~E. Shannon \emph{et~al.}, ``Coding theorems for a discrete source with a fidelity criterion,'' \emph{IRE Nat. Conv. Rec}, vol.~4, no. 142-163, p.~1, 1959.

\bibitem{berger:B1971}
T.~Berger, \emph{Rate Distortion Theory:~A Mathematical Basis for Data Compression}.\hskip 1em plus 0.5em minus 0.4em\relax Englewood Cliffs, NJ: Prentice-Hall, 1971.

\bibitem{cover}
T.~M. Cover and J.~A. Thomas, \emph{Elements of information theory}.\hskip 1em plus 0.5em minus 0.4em\relax John Wiley \& Sons, 2012.

\bibitem{wyner1971bounds}
A.~Wyner and J.~Ziv, ``Bounds on the rate-distortion function for stationary sources with memory,'' \emph{IEEE Transactions on Information Theory}, vol.~17, no.~5, pp. 508--513, 1971.

\bibitem{gray1973}
R.~Gray, ``A new class of lower bounds to information rates of stationary sources via conditional rate-distortion functions,'' \emph{IEEE Transactions on Information Theory}, vol.~19, no.~4, pp. 480--489, 1973.

\bibitem{sourcecoding}
R.~Gray and A.~Wyner, ``Source coding for a simple network,'' \emph{Bell System Technical Journal}, vol.~53, no.~9, pp. 1681--1721, 1974.

\bibitem{viswanatha2014lossy}
K.~B. Viswanatha, E.~Akyol, and K.~Rose, ``The lossy common information of correlated sources,'' \emph{IEEE Transactions on Information Theory}, vol.~60, no.~6, pp. 3238--3253, 2014.

\bibitem{xu2015lossy}
G.~Xu, W.~Liu, and B.~Chen, ``A lossy source coding interpretation of wyner’s common information,'' \emph{IEEE Transactions on Information Theory}, vol.~62, no.~2, pp. 754--768, 2015.

\bibitem{feng2006rate}
H.~Feng, ``On the rate loss of multiterminal source codes,'' in \emph{2006 IEEE International Symposium on Information Theory}.\hskip 1em plus 0.5em minus 0.4em\relax IEEE, 2006, pp. 1414--1418.

\bibitem{zamir1999multiterminal}
R.~Zamir and T.~Berger, ``Multiterminal source coding with high resolution,'' \emph{IEEE Transactions on Information Theory}, vol.~45, no.~1, pp. 106--117, 1999.

\bibitem{nayak2010successive}
J.~Nayak, E.~Tuncel, D.~Gunduz, and E.~Erkip, ``Successive refinement of vector sources under individual distortion criteria,'' \emph{IEEE transactions on information theory}, vol.~56, no.~4, pp. 1769--1781, 2010.

\bibitem{liu2021rate}
J.~Liu, W.~Zhang, and H.~V. Poor, ``A rate-distortion framework for characterizing semantic information,'' in \emph{2021 IEEE International Symposium on Information Theory (ISIT)}.\hskip 1em plus 0.5em minus 0.4em\relax IEEE, 2021, pp. 2894--2899.

\bibitem{liu2022indirect}
J.~Liu, S.~Shao, W.~Zhang, and H.~V. Poor, ``An indirect rate-distortion characterization for semantic sources: General model and the case of {G}aussian observation,'' \emph{arXiv preprint arXiv:2201.12477}, 2022.

\bibitem{stavrou2022rate}
P.~A. Stavrou and M.~Kountouris, ``A rate distortion approach to goal-oriented communication,'' in \emph{2022 IEEE International Symposium on Information Theory (ISIT)}.\hskip 1em plus 0.5em minus 0.4em\relax IEEE, 2022, pp. 590--595.

\bibitem{guo2022semantic}
T.~Guo, Y.~Wang, J.~Han, H.~Wu, B.~Bai, and W.~Han, ``Semantic compression with side information: A rate-distortion perspective,'' \emph{arXiv preprint arXiv:2208.06094}, 2022.

\bibitem{charalambous2021joint}
C.~D. Charalambous and E.~Stylianou, ``Joint nonanticipative rate distortion function for a tuple of random processes with individual fidelity criteria,'' in \emph{2021 60th IEEE Conference on Decision and Control (CDC)}.\hskip 1em plus 0.5em minus 0.4em\relax IEEE, 2021, pp. 5952--5957.

\bibitem{xiao}
J.-J. Xiao and Z.-Q. Luo, ``Compression of correlated {G}aussian sources under individual distortion criteria,'' in \emph{43rd Allerton Conference on Communication, Control, and Computing}, 2005, pp. 438--447.

\bibitem{Berger1979}
T.~{Berger}, K.~{Housewright}, J.~{Omura}, {Suiyin Yung}, and J.~{Wolfowitz}, ``An upper bound on the rate distortion function for source coding with partial side information at the decoder,'' \emph{IEEE Transactions on Information Theory}, vol.~25, no.~6, pp. 664--666, November 1979.

\bibitem{Tung}
S.-Y. Tung, \emph{Multiterminal Source Coding.}\hskip 1em plus 0.5em minus 0.4em\relax Ph.D. dissertation, Inf. Theory, Cornell Univ., Ithaca, NY, USA, May 1978.

\bibitem{Oohama2005}
Y.~{Oohama}, ``Rate-distortion theory for {G}aussian multiterminal source coding systems with several side informations at the decoder,'' \emph{IEEE Transactions on Information Theory}, vol.~51, no.~7, pp. 2577--2593, Jul. 2005.

\bibitem{lapidoth}
A.~Lapidoth and S.~Tinguely, ``Sending a bivariate {G}aussian over a {G}aussian {MAC},'' \emph{IEEE Transactions on Information Theory}, vol.~56, no.~6, pp. 2714--2752, 2010.

\bibitem{charalambous2022realization}
C.~D. Charalambous and J.~H. van Schuppen, ``A realization approach to lossy network compression of a tuple of correlated multivariate {G}aussian {RV}s,'' \emph{Entropy}, vol.~24, no.~9, p. 1227, 2022.

\bibitem{stylianou2021joint}
E.~Stylianou, C.~D. Charalambous, and T.~Charalambous, ``Joint rate distortion function of a tuple of correlated multivariate {G}aussian sources with individual fidelity criteria,'' in \emph{2021 IEEE International Symposium on Information Theory (ISIT)}.\hskip 1em plus 0.5em minus 0.4em\relax IEEE, 2021, pp. 2167--2172.

\bibitem{hotelling1936relations}
H.~Hotelling, ``Relations between two sets of variates,'' \emph{Biometrika}, vol.~28, no. 3/4, pp. 321--377, 1936.

\bibitem{gallager1968information}
R.~G. Gallager, \emph{Information theory and reliable communication}.\hskip 1em plus 0.5em minus 0.4em\relax Springer, 1968, vol.~2.

\bibitem{van2021control}
J.~H. van Schuppen, \emph{Control and system theory of discrete-time stochastic systems}.\hskip 1em plus 0.5em minus 0.4em\relax Springer, 2021.

\bibitem{albert1969conditions}
A.~Albert, ``Conditions for positive and nonnegative definiteness in terms of pseudoinverses,'' \emph{SIAM Journal on Applied Mathematics}, vol.~17, no.~2, pp. 434--440, 1969.

\bibitem{rami2000linear}
M.~A. Rami and X.~Y. Zhou, ``Linear matrix inequalities, riccati equations, and indefinite stochastic linear quadratic controls,'' \emph{IEEE Transactions on Automatic Control}, vol.~45, no.~6, pp. 1131--1143, 2000.

\bibitem{blahut1972computation}
R.~Blahut, ``Computation of channel capacity and rate-distortion functions,'' \emph{IEEE transactions on Information Theory}, vol.~18, no.~4, pp. 460--473, 1972.

\bibitem{boche2023algorithmic}
H.~Boche, R.~F. Schaefer, and H.~V. Poor, ``Algorithmic computability and approximability of capacity-achieving input distributions,'' \emph{IEEE Transactions on Information Theory}, vol.~69, no.~9, pp. 5449--5462, 2023.

\bibitem{lee2023computability}
Y.~Lee, H.~Boche, and G.~Kutyniok, ``Computability of optimizers,'' \emph{IEEE Transactions on Information Theory}, vol.~70, no.~4, pp. 2967--2983, 2024.

\bibitem{cvx}
M.~Grant and S.~Boyd, ``{CVX}: Matlab software for disciplined convex programming, version 2.1,'' \url{http://cvxr.com/cvx}, Mar. 2014.

\end{thebibliography}

\end{document}